\setlist[enumerate]{label=(\roman*)}
\setlist{nosep}
\newcommand{\R}{\mathbb{R}}
\newcommand{\smallmat}[1]{\left[ \begin{smallmatrix} #1 \end{smallmatrix} \right]}
\newcommand{\Z}{\mathbb{Z}}
\newcommand{\N}{\mathbb{N}}
\newcommand{\zero}{\mathbf{0}}
\begin{document}

\title{Totally $\Delta$-modular IPs with two non-zeros in most rows}

\author{Stefan Kober\inst{1}\orcidID{0000-0003-2610-1494}}
\authorrunning{S. Kober}

\institute{\ \inst{1}Université Libre de Bruxelles}

\maketitle

\begin{abstract}
    Integer programs (IPs) on constraint matrices with bounded subdeterminants are conjectured to be solvable in polynomial time.
    We give a strongly polynomial time algorithm to solve IPs where the constraint matrix has bounded subdeterminants and at most two non-zeros per row after removing a constant number of rows and columns.
    This result extends the work by Fiorini, Joret, Weltge \& Yuditsky (J. ACM 72(1), 1-50 (2025)) by allowing for additional, unifying constraints and variables.
\keywords{Integer Programming \and Parametrized Integer Programming \and Subdeterminants \and Structural Graph Theory \and Dynamic Programming.}
\end{abstract}
\section{Introduction}
In this work, we are concerned with integer programs (IPs) of the following form:
\begin{equation}
    \min \, \left\{ c^\intercal x : Mx \le b, \, x \in \Z^n \right\}, \tag{IP}\label{eq_ip}
\end{equation}
where $c\in\Z^n$, $M\in\Z^{m\times n}$, and $b\in\Z^m$ for $m,n\in\N$.
In general, assuming $\mathrm{P}\neq\mathrm{NP}$, one cannot hope for a polynomial-time algorithm to solve~\eqref{eq_ip}.
Still, there are certain subclasses of IPs that can be solved efficiently.
The most prominent example for this is IPs with \emph{totally unimodular} (TU) constraint matrices, i.e., if $M$ is TU.
Recall that a matrix $M\in\Z^{m\times n}$ is TU, if all its \emph{subdeterminants} are in $\{0,\pm1\}$, where a subdeterminant is the determinant of a square submatrix of $M$.

Totally unimodular matrices possess the nice geometric property that the convex hull of the feasible solutions to~\eqref{eq_ip} is equal to the space of feasible solutions of its linear relaxation $\min \, \left\{ c^\intercal x : Mx \le b, \, x \in \R^n \right\}$.
Therefore, we can solve such IPs by solving a single linear program. 
Such (combinatorial) linear programs can be solved in strongly polynomial time due to a result by Tardos~\cite{Tar86}.
This implies polynomial solvability for a collection of combinatorial optimization problems that can be stated as IPs on TU constraint matrices, such as the maximum-flow problem, the matching problem in bipartite graphs, or the stable set problem in bipartite graphs.

Almost 30 years ago, Shevchenko~\cite{She96} asked for generalizations of this result to IPs on constraint matrices with bounded subdeterminants.
We call a matrix $M$ \emph{totally $\Delta$-modular}, if all its subdeterminants are in $[-\Delta,\Delta]\cap\Z$.
A fundamental open question in combinatorial optimization is the following:

\emph{Can we solve \eqref{eq_ip} in polynomial time, when $M$ is totally $\Delta$-modular?}

In a breakthrough paper, Artmann, Weismantel \& Zenklusen~\cite{AWZ17} answered the question in the affirmative for $\Delta=2$.
Generalizing their ideas to $\Delta\ge3$ has proven to be difficult, despite some progress, see e.g. Nägele, Santiago \& Zenklusen~\cite{NSZ24}.
Fiorini, Joret, Weltge \& Yuditsky~\cite{FJWY25} fix $\Delta$ to be an arbitrary constant, and further require a structural condition on $M$, see \Cref{thm_twononzeros}.

\begin{theorem}[Fiorini, Joret, Weltge \& Yuditsky~\cite{FJWY25}]\label{thm_twononzeros}
    Let $\Delta\in\N$.
    There is a polynomial-time algorithm to solve~\eqref{eq_ip} if $M$ is totally $\Delta$-modular and has at most two non-zeros per row (or column).
\end{theorem}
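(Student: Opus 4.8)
The plan is to translate the hypothesis on $M$ into a structural condition on a graph, to extract from it a small set of exceptional variables whose deletion leaves a totally unimodular system, and then to combine proximity of~\eqref{eq_ip} to its linear relaxation with enumeration over those variables. Throughout I assume $M$ has at most two non-zeros per row; the parenthetical case of at most two per column reduces to this one by a standard transformation. After dividing each row by the gcd of its non-zero entries and rounding the corresponding entry of $b$ down (which preserves the integer feasible set and total $\Delta$-modularity), I may assume every row is primitive; rows with no non-zero are redundant or infeasible, and rows with a single non-zero are plain unit-coefficient bounds on single variables. Encode the two-non-zero rows as a multigraph $G=(V,E)$, one vertex per variable $x_v$ and one edge per row, the edge $e=uv$ carrying $\alpha_e x_u+\beta_e x_v\le\gamma_e$; parallel edges are allowed.

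First I would read off what total $\Delta$-modularity says about $(G,\alpha,\beta)$. If $N$ is a square submatrix of $M$ on rows $R$ and columns $C$ with $|R|=|C|$ and $\det N\neq 0$, then iteratively Laplace-expanding along rows or columns of $N$ that contain a single non-zero terminates at the incidence matrix of a $2$-regular sub-multigraph $H$ — a disjoint union of cycles — times a product of entries of $M$, so that $|\det N| = (\prod|\text{entries}|)\cdot\prod_{C'}|p_{C'}|$ over the cycles $C'$ of $H$, where $p_{C'}=\prod_{e\in C'}\alpha_e-(-1)^{|C'|}\prod_{e\in C'}\beta_e$ (a length-two cycle contributing the $2\times 2$ determinant of two parallel edges). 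Hence $M$ is totally $\Delta$-modular exactly when all such products lie in $[-\Delta,\Delta]$; in particular $|\alpha_e|,|\beta_e|\le\Delta$, every cycle $C'$ of $G$ has $|p_{C'}|\le\Delta$, and vertex-disjoint families of cycles have bounded product of values. When $\alpha_e,\beta_e\in\{\pm1\}$, setting $\sigma(e)=-\alpha_e\beta_e$ the condition $p_{C'}=0$ says precisely that $C'$ is a positive cycle of the signed graph $(G,\sigma)$, and $(G,\sigma)$ having only positive cycles is exactly the classical criterion for its constraint matrix to be totally unimodular.

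The crux is to produce a small modulator, and I expect this to be the main work. Call an edge \emph{heavy} if $\max(|\alpha_e|,|\beta_e|)\ge 2$. A matching of $t$ heavy edges yields, through the columns of their heavy endpoints, a diagonal submatrix of absolute determinant at least $2^t$, so $t\le\log_2\Delta$; the endpoints of a maximal heavy matching therefore form a set $X_1$ of at most $2\log_2\Delta$ vertices covering all heavy edges, and $G-X_1$ has entries in $\{0,\pm1\}$. Next, $t$ pairwise vertex-disjoint negative cycles of $(G-X_1,\sigma)$ give a block-diagonal submatrix of absolute determinant $2^t$, so again $t\le\log_2\Delta$; invoking the Erd\H{o}s--P\'osa property for negative cycles in signed graphs (which generalizes Reed's theorem on odd cycles), there is a set $X_2$ of $g(\Delta)$ vertices meeting every negative cycle of $(G-X_1,\sigma)$. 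With $X=X_1\cup X_2$ of size $f(\Delta):=2\log_2\Delta+g(\Delta)$, every cycle of $G-X$ is positive, so by the first step — now with all entries in $\{0,\pm1\}$ and all cycle values $0$ — the constraint matrix on the rows with both endpoints outside $X$ and the columns $V\setminus X$ is totally unimodular, and it stays so when the primitive unit-coefficient bound rows are appended. The delicate part is the Erd\H{o}s--P\'osa input together with the bookkeeping that the conditions used are exactly those forced by total $\Delta$-modularity; this is where structural graph theory enters.

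Finally, the algorithm. After routine preliminaries deciding feasibility and (un)boundedness of~\eqref{eq_ip}, solve its linear relaxation for an optimal vertex $x^{*}$; this takes strongly polynomial time since all entries of $M$ are bounded by the constant $\Delta$~\cite{Tar86}. By the classical proximity bound for constraint matrices with subdeterminants at most $\Delta$, some optimal solution of~\eqref{eq_ip} lies within $\ell_\infty$-distance $n\Delta$ of $x^{*}$, so every variable may be confined to an interval of $2n\Delta+1$ integers. Enumerate the at most $(2n\Delta+1)^{f(\Delta)}$ integer assignments of the variables indexed by $X$ inside these intervals; for each, substitution turns~\eqref{eq_ip} into an integer program on the variables $V\setminus X$ whose constraint matrix — after the crossing rows, now single-variable inequalities, are rounded to unit-coefficient bounds — is totally unimodular with integer right-hand side, hence solvable as a single strongly polynomial linear program by Tardos. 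Taking the best value over all enumerated assignments solves~\eqref{eq_ip}: the assignment agreeing with an optimal integer solution, which by proximity lies in the enumerated range, is completed by the residual optimum to that solution. Since $f(\Delta)$ is a constant, the enumeration and hence the whole procedure runs in (strongly) polynomial time, proving \Cref{thm_twononzeros}.
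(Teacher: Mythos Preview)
This theorem is not proved in the present paper; it is quoted as a result of Fiorini, Joret, Weltge and Yuditsky and then used as a black box (in the proof of \Cref{thm_main}, bags of type~\ref{type2} are dispatched precisely by invoking \Cref{thm_twononzeros}). So there is no ``paper's own proof'' to compare your attempt against.

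That said, for the row case your sketch is a legitimate argument, and it follows a genuinely different and shorter route than the one actually taken in~\cite{FJWY22}. Their proof develops a full structure theorem for graphs of bounded odd cycle packing number in the Robertson--Seymour style (near-embeddings in surfaces, apices, tree-decompositions) and runs a dynamic program over that decomposition; this is exactly why the present paper remarks that~\cite{FJWY22} relies on the graph minor structure theorem. Your approach instead uses the Erd\H{o}s--P\'osa property for odd cycles to extract a constant-size modulator $X$, enumerates the integer values on $X$ inside the proximity window, and solves each residual totally unimodular system by a single linear program. What the structural route in~\cite{FJWY22} buys is a stand-alone combinatorial algorithm for weighted stable set on bounded-ocp graphs and a controlled dependence on $\Delta$; what your route buys is brevity, at the price of an enormous (if constant) exponent coming from Reed's function.

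Two places deserve more care. First, the parenthetical column case does \emph{not} reduce to the row case by any routine manipulation of~\eqref{eq_ip}; transposing $M$ does not turn one problem into the other, and~\cite{FJWY22} handles that case by a separate argument, so your one-line dismissal is a gap. Second, the Erd\H{o}s--P\'osa step should be made explicit and algorithmic: Reed's theorem is stated for ordinary odd-length cycles, and you should spell out the reduction of the signed case to it (for instance, subdivide every even edge once so that negative cycles of $(G,\sigma)$ become odd-length cycles of an auxiliary graph, and pull the transversal back) together with a polynomial-time construction of $X_2$, rather than assert it.
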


Another example for restricting the structure of the constraint matrix are \emph{nearly totally unimodular matrices} as studied by Aprile et al.~\cite{AFJ24}.
The same idea has also proven helpful for a strongly polynomial-time algorithm for linear programming, see Dadush et al.~\cite{DKN24}, who also study the case of two non-zeros per row (or column).

We follow this second stream of research and investigate the question in the case where $\Delta$ is arbitrary but fixed, and there is a further structural condition on the constraint matrix.
In the context of this work, we study the following parametrized problem, and provide a strongly polynomial-time algorithm.
In particular, our result combines ideas from Fiorini et al.~\cite{FJWY25} and Aprile et al.~\cite{AFJ24}.
\begin{problem}\label{prob_general}
    Fix $\Delta,k\in\N$.
    Let $m_1,n_1\in\N$ and $m_2,n_2\le k$, and define $m:=m_1+m_2$, $n:=n_1+n_2$.
    Let $M=\smallmat{A&B\\W&D}\in\Z^{m\times n}$ be totally $\Delta$-modular, where $A\in\Z^{m_1\times n_1}$, $B\in\Z^{m_1\times n_2}$, $W\in\Z^{m_2\times n_1}$ and $D\in\Z^{m_2\times n_2}$. 
    Assume $A$ has at most two non-zeros per row.
    Let $c\in\Z^{n}$ and $b\in\Z^{m}$.
    Solve \eqref{eq_ip}.
\end{problem}

We reduce Problem~\ref{prob_general} to Problem~\ref{prob_reduced} below, in which the main technical difficulty for this project lies.

\begin{problem}\label{prob_reduced}
    Fix $\Delta,k\in\N$.
    Let $A\in\{0,\pm1\}^{m\times n}$, $W\in\Z^{k\times n}$, let $M=\smallmat{A\\W}$ be totally $\Delta$-modular and assume $A$ has two non-zeros per row.
    Let $c,\ell,u\in\Z^n$, $b\in\Z^m$ and $d\in\Z^k$.
    Solve $\ \min \, \left\{ c^\intercal x : Ax \le b, \, Wx=d, \, \ell\le x\le u,\, x \in \Z^n \right\}$.
\end{problem}

In the \emph{($k$-dimensional) partially ordered knapsack problem}, we are given an integer $k\in\N$, a set of elements $V$ with an underlying partial order, a profit function $p:V\to\Z$, $k$ weight-functions $W_i:V\to\Z$ for $i\in[k]$, and $k$ budgets $b_i\in\Z$ for $i\in[k]$.
The goal is to find the maximum profit subset of $V$ that respects the budgets with respect to the weight functions and adheres to the partial order as precedence constraints, i.e., in order to choose any element, we need to choose all its predecessors with respect to the partial order.
This problem is known to be $\mathrm{NP}$-complete even if there is only one weight function with bounded entries, and the partial order is bipartite~\cite{JN83}.
It can be formulated as an IP on a totally unimodular constraint matrix that has two non-zeros per row, and $k$ extra constraints.
In the following, we refer to this formulation as the respective IP.
This makes it an interesting problem to consider in the context of this work.
Note that the Hasse diagram of the partial order induces a directed graph $G$ on $V$.

\subsection{Contribution}
Our main result is a strongly polynomial-time algorithm for IPs as defined in Problem~\ref{prob_general}, see \Cref{thm_main}.
By combining our structural results with a proximity result by Aprile et al.~\cite{AFJ24}, our proof technique implies an FPT-algorithm for Problem~\ref{prob_reduced} if $A$ is totally unimodular, see \Cref{cor_tu} and in particular for the $k$-dimensional partially order knapsack problem, see \Cref{cor_pok}.
For this problem, only few non-trivial polynomially solvable cases are known~\cite{KS07}.
A proof for \Cref{cor_tu} has previously been given in the author's thesis~\cite{Kob23}.

Proofs for the statements are deferred to \Cref{sec:dp}.

\begin{theorem}\label{thm_main}
    There is a strongly polynomial-time algorithm to solve Problem~\ref{prob_general}.
\end{theorem}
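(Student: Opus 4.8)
The plan is to prove \Cref{thm_main} by reducing \Cref{prob_general} to \Cref{prob_reduced}, and then solving the latter by dynamic programming; the second step carries essentially all of the difficulty and is the content of \Cref{sec:dp}. For the reduction, split $x=(x_1,x_2)$ along the block structure of $M$, so that $x_2\in\Z^{n_2}$ with $n_2\le k$. We may assume the linear relaxation of \eqref{eq_ip} is feasible and bounded: both are linear-programming questions, decidable in strongly polynomial time by Tardos~\cite{Tar86}, and when the relaxation is unbounded one decides unboundedness of \eqref{eq_ip} by testing the rational recession cone for an integer ray. Let $x^\ast$ be an optimal vertex of the relaxation. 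A proximity result (e.g.\ that of Aprile et al.~\cite{AFJ24}) confines some optimal integer solution to a box around $x^\ast$ and, in particular, its complicating part to boundedly many candidate values of $x_2$; as $n_2\le k$ is constant, we enumerate these in polynomial time and fix one. The residual problem in $x_1$ is $\min\{\,c_1^\intercal x_1:\ Ax_1\le b_1-Bx_2,\ Cx_1\le b_2-Dx_2,\ \ell_1\le x_1\le u_1,\ x_1\in\Z^{n_1}\,\}$, where $A$ has at most two non-zeros per row and $\smallmat{A\\C}$, being a column submatrix of $M$, is totally $\Delta$-modular. Introducing a slack vector $s$ for the $m_2\le k$ rows of $C$ replaces $Cx_1\le b_2-Dx_2$ by $\smallmat{C&I}\smallmat{x_1\\s}=b_2-Dx_2$ with $0\le s$ bounded above by the box bounds; appending the unit columns preserves total $\Delta$-modularity, since a square submatrix using such a column either has a zero column or, by Laplace expansion, reduces to a submatrix of $\smallmat{A\\C}$ with further unit columns. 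After routine preprocessing — cleaning up rows of $A$ with fewer than two non-zeros (infeasible, redundant, or a single-variable bound), normalising entries, and padding with zero rows — this is an instance of \Cref{prob_reduced}.

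To solve \Cref{prob_reduced}, read $A\in\{0,\pm1\}^{m\times n}$, with exactly two non-zeros per row, as the signed incidence matrix of a graph $G$ on vertex set $[n]$ (the columns), one edge per row, signs recording the parity of edges. Total $\Delta$-modularity of $\smallmat{A\\W}$, hence of $A$, forces $G$ to be structurally tame: by the structural analysis of Fiorini et al.~\cite{FJWY22} behind \Cref{thm_twononzeros}, after deleting a vertex set $S$ of size bounded in terms of $\Delta$, the graph $G-S$ admits a decomposition of bounded width — a tree decomposition of bounded width, or a bounded-genus embedding — along which Fiorini et al.\ solve their stable-set-type IP by dynamic programming, with control on how the edges incident to $S$ and the sign pattern interact with the decomposition. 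We guess up front the (box-bounded) values of the variables indexed by $S$ — polynomially many, since $|S|$ is bounded — and then run a dynamic program over the decomposition of $G-S$ whose cells record, besides the partial assignment on the current bag, the vector in $\Z^k$ contributed by the already-processed variables to the left-hand side of $Wx=d$. A transition merges children by summing these partial vectors and checking the rows of $A$ localised at the current bag; an optimal solution is read off from a root cell whose accumulated vector equals $d$, adding back the objective contribution of $S$.

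The main obstacle is entirely in the second step: making the dynamic program respect the \emph{global} equality system $Wx=d$ and the exceptional set $S$ simultaneously. The vertices of $S$ may have large degree, hence occur in many rows of $A$, so their values must be threaded through every bag of the decomposition; and the sign pattern of $A$ — balanced versus unbalanced cycles — must be reconciled with the decomposition so that each row of $A$ is verified in exactly one bag. A further difficulty, needed for a \emph{strongly} polynomial running time rather than merely a polynomial one, is to bound the number of cells polynomially without any dependence on the magnitudes of $\ell,u,b,d$ — e.g.\ by solving a bounded-size auxiliary optimisation at each node instead of enumerating variable values — so that the whole algorithm inherits the strong polynomiality of Tardos's method~\cite{Tar86}.
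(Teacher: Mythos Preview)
Your reduction step is broadly fine (though the paper uses Cook et al.'s proximity, not Aprile et al.'s --- the latter is for nearly TU matrices and it is explicitly noted in the paper that it is unclear whether such a dimension-free bound holds here). The real problem is in your second step, and it is a genuine gap rather than a variant approach.

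You propose to open up the algorithm of Fiorini et al.\ and augment each ``DP cell'' with the partial vector $Wx\in\Z^k$. This does not work, for two reasons. First, the algorithm behind \Cref{thm_twononzeros} is \emph{not} a bounded-treewidth dynamic program whose cells you can simply enlarge: the pieces of their decomposition are bounded-genus (via odd-$K_t$-minor-freeness and the graph minor structure theorem), and the subproblems on those pieces are solved by homology / covering-space arguments, not by enumerating a polynomial state space. There is nothing to ``add a coordinate to''. Second --- and more conceptually --- your plan only ever uses that $A$ is totally $\Delta$-modular, never that $\smallmat{A\\W}$ is. If $W$ were an arbitrary bounded-entry matrix supported on all of $[n]$, your argument would still go through verbatim, which should make you suspicious: the extra rows carry real information, and the paper uses it.

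The paper's route is different in kind. From the total $\Delta$-modularity of $\smallmat{A\\W}$ it extracts a new combinatorial invariant, the \emph{alternating weight of a tree}, and shows it is bounded by $\Delta$ (\Cref{prop_alt}); this forces the \emph{terminal set} $R=\{v:\exists i,\ W_{iv}\neq 0\}$ to avoid large ``decorated trees'' (\Cref{lemma_decorated}). Via a theorem of Ding, this yields a tree-decomposition (\Cref{thm_refinedstruct}) with bounded adhesion and degree in which every bag is either of bounded size, or is a leaf whose terminals all lie in the adhesion. In bags of the second type, once the adhesion variables are fixed the equality $Wx=d$ becomes vacuous on the remaining variables, so the local problem is a pure two-nonzeros-per-row IP and \Cref{thm_twononzeros} applies \emph{as a black box}. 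The dynamic program then only needs to enumerate assignments on bounded adhesions and right-hand-side splits in $\mathcal{D}$, both polynomial. Your sketch misses this terminal-confinement step entirely, and without it there is no way to reduce the large-bag subproblems to instances that \Cref{thm_twononzeros} can handle.
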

\begin{corollary}\label{cor_tu}
    There is an FPT-algorithm to solve Problem~\ref{prob_reduced} with respect to the largest subdeterminant if $A$ is totally unimodular.
\end{corollary}
\begin{corollary}\label{cor_pok}
    There is an FPT-algorithm for the $k$-dimensional partially ordered knapsack problem with parameter 
    \[\Delta=\max\{\sum_{v\in V(H)}W_i(v):i\in[k],H\text{ connected subgraph of } G\}.\]
\end{corollary}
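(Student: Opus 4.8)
The plan is to write the $k$-dimensional partially ordered knapsack problem as an instance of Problem~\ref{prob_general} and then to apply \Cref{thm_main}, upgraded to an FPT running time via a proximity argument. Introduce a variable $x_v$ for each $v\in V$, encode precedence by the inequalities $x_v\le x_u$ for every covering pair $u\prec v$ of the partial order, add the box constraints $0\le x_v\le 1$ and the $k$ budget inequalities $\sum_{v\in V}W_i(v)\,x_v\le b_i$, and minimise $-\sum_v p(v)x_v$; this is of the form \eqref{eq_ip}. The block $A$ consisting of the precedence rows and the box rows is an edge--vertex incidence matrix of an orientation of a subgraph of $G$ with unit rows appended, hence totally unimodular, and it has at most two non-zeros per row; the $k$ budget rows form the extra block, and there are no extra columns. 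It therefore only remains to bound the subdeterminants of the full constraint matrix $M$ to be in the situation of Problem~\ref{prob_general}.

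This is the structural heart of the argument: I claim that $M=\smallmat{A\\W}$ is totally $\Delta'$-modular for $\Delta':=k^{k/2}\Delta^{k}$. Let $S$ be a square submatrix of $M$. Expanding along the unit rows coming from the box constraints, we may assume $S$ consists only of precedence rows together with $j\le k$ budget rows. If the precedence rows of $S$ are linearly dependent then $\det S=0$; otherwise the arcs underlying those rows form a forest in $G$ which, since $S$ is square, has exactly $j$ components (an isolated column of $S$ counting as a singleton component), with vertex sets $H_1,\dots,H_j$ partitioning the column set of $S$. Each $H_\ell$ is a connected subgraph of $G$. Since the incidence matrix of a tree is totally unimodular and its rows generate the lattice of integer vectors that sum to zero on the tree, the precedence rows of $S$ generate the lattice $\{x:\sum_{v\in H_\ell}x_v=0\text{ for all }\ell\}$, for which the indicator vectors $\one_{H_1},\dots,\one_{H_j}$ form a complementary basis; consequently $\det S=\pm\det\big(W_i(V(H_\ell))\big)_{i,\ell}$. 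This is a $j\times j$ determinant each of whose entries is the weight of a connected subgraph of $G$, hence lies in $[0,\Delta]$ (here we use that the weights are non-negative, as is standard for knapsack). Hadamard's inequality now yields $|\det S|\le j^{j/2}\Delta^{j}\le\Delta'$, which proves the claim.

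By \Cref{thm_main} the problem is then solvable in time polynomial in $|V|$ for every fixed $\Delta$ and $k$. To turn this into an FPT algorithm, that is, a running time of the form $f(\Delta,k)\cdot\mathrm{poly}(|V|)$, I would combine the dynamic program underlying \Cref{thm_main} (see \Cref{sec:dp}) with the proximity theorem of Aprile et al.~\cite{AFJ24}: it bounds the distance between an optimal integer solution and an optimal solution of the linear relaxation by a quantity depending only on the subdeterminant bound $\Delta'$ and the number $k$ of extra rows. Using this bound to localise the search restricts the state space of the dynamic program to a size independent of $|V|$, and the claimed FPT algorithm follows.

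The main obstacle is the structural step, and within it the combinatorial bookkeeping that identifies the tree components of a forest sitting inside a square submatrix of $M$ with connected subgraphs of $G$ and that turns $\det S$ into a determinant of connected-subgraph weights; one must in particular handle submatrices mixing precedence rows, box rows and several budget rows, and keep careful track of orientations and signs. A secondary, more routine point is checking that the proximity result of~\cite{AFJ24} applies to $M$ as stated and that the resulting localisation is compatible with the dynamic program of \Cref{sec:dp}.
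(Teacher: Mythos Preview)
Your proposal is correct and reaches the FPT conclusion, but it takes a genuinely different route from the paper. The paper never bounds $\Delta(M)$. Instead it exploits that the precedence matrix $A$ is totally unimodular, so every edge of the associated signed graph is even and the alternating weight of any subtree $T$ collapses to $\mathrm{alt}(T,w_i)=|w_i(V(T))|$; thus the corollary's parameter $\Delta$ \emph{is} the bound on $\mathrm{alt}(S,w_i)$ required by \Cref{lemma_decorated} and \Cref{thm_refinedstruct}, and one can run the machinery of \Cref{sec:dp} with $r=2k\Delta+1$ directly. You instead establish $\Delta(M)\le k^{k/2}\Delta^{k}$ via the forest/Hadamard argument and then pass through \Cref{prop_alt}; this is a valid and rather pleasing self-contained detour, at the price of doubly-exponential (rather than polynomial) dependence of the decomposition parameters on $\Delta$. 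Both approaches then invoke the proximity result of Aprile et al.\ to make $f_{\ref{thm_cookproximity}}$ depend only on $\Delta,k$. The paper additionally notes explicitly that, because $A$ is TU, the type~\ref{type2} local instances are solvable by Tardos' algorithm with an exponent independent of $\Delta$; you have this ingredient (you observed $A$ is TU) but do not connect it to the FPT claim, which is the one place where a reader might worry that the exponent of $n$ still depends on $\Delta$.

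One small inaccuracy worth fixing: the state space of the dynamic program is not literally independent of $|V|$ after the proximity localisation, since the right-hand-side space $\mathcal{D}$ still carries a factor $n^{k}$. This is harmless for the FPT claim because $k$ is fixed, but your sentence overstates what the proximity bound alone buys.
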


Note that for $k=1$, $\Delta$ in \Cref{cor_pok} corresponds to the largest subdeterminant of the corresponding constraint matrix, see~\cite[Corollary~4.24]{Kob23}.
For constant $k$, the subdeterminants of the respective IP are bounded by a function in $k$ and $\Delta$. 

Our results contribute to the research on IPs with bounded subdeterminants in the following ways.
We combine ideas from two previous technical papers~\cite{FJWY25,AFJ24} and push the boundary of IPs with bounded subdeterminants in terms of polynomial-time algorithms.
In particular, we relate the subdeterminants of matrices that have at most two non-zeros after removing one row to a novel invariant of weighted (signed) graphs, called the \emph{alternating weight of a tree}, see \Cref{prop_alt}.
We show that a bound on the alternating weight of a tree in an integer-weighted graph gives a structural restriction on the structure of non-zero weights within the graph.
More specifically, given an instance of Problem~\ref{prob_reduced}, we obtain a weighted graph and show that graphs corresponding to matrices with bounded subdeterminants admit a structured decomposition, see \Cref{thm_refinedstruct}.
We exploit this decomposition in a dynamic program that allows us to solve Problem~\ref{prob_general}, see \Cref{thm_main}.

We remark that it is unclear whether a constant proximity and decomposition result similar to the one in~\cite{AFJ24} holds in our case.
Therefore, in \Cref{sec:struct} we crucially need to exploit the fact that we work in a different graph class within the context of this work, and provide a substantially stronger decomposition.
In particular, Aprile et al.~\cite{AFJ24} show that graphs associated with their instances admit a decomposition such that the bags of the tree-decomposition either contain a \emph{small} number of vertices, correspond to a leaf of the tree and contain a \emph{small} number of roots, or are \emph{almost embeddable in a surface of bounded genus}.
We can show that the latter type is not necessary in graphs occuring in the context of this work.
Finally, our structural results are largely self-contained and in particular do not make use of the graph minor structure theorem for $K_t$-minor free graphs as required in~\cite{FJWY25,AFJ24}.

\subsection{Background and related work}
The main focus of our work lies in integer programming with bounded subdeterminants, see \Cref{fig_cont} for an overview of recent results.
There has been a large number of exciting results in the last years that consider the question of polynomial solvability, e.g., the aforementioned papers~\cite{AWZ17,FJWY25,NSZ24,AFJ24}, if the matrix is $\{a,b,c\}$-modular~\cite{GSW22}, if the problem corresponds to the total matching problem~\cite{FFKY24}, or for restricted subdeterminants over a polynomial ring~\cite{CKW24}.
Further, there is a randomized polynomial-time algorithm for feasibility of IPs on constraint matrices that are strictly $\Delta$-modular for $\Delta\in\{3,4\}$~\cite{NSZ24,NNSZ24}.

\begin{figure}[ht]
    \centering
    \includegraphics[width=0.65\textwidth]{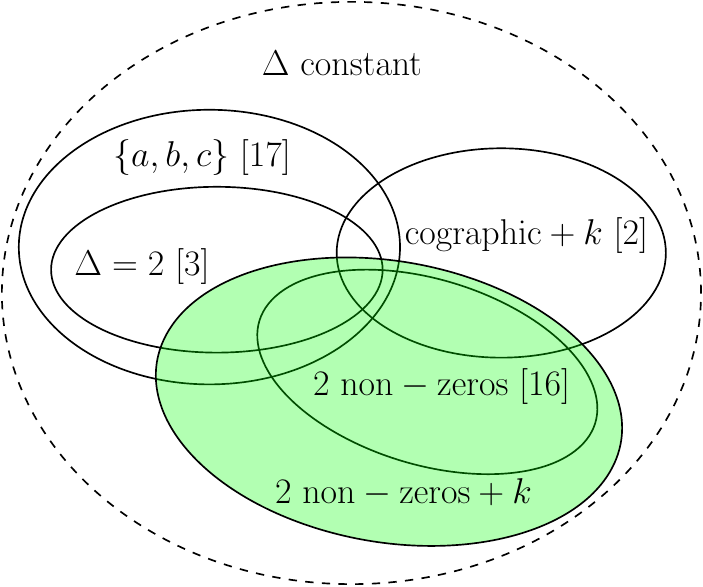}
    \caption{Recent results on IPs with bounded subdeterminants: bimodular matrices~\cite{AWZ17}, $\{a,b,c\}$-modular matrices~\cite{GSW22}, two non-zeros per row ($2$ non-zeros)~\cite{FJWY25}, nearly totally unimodular matrices, the cographic case (cographic $+k$)~\cite{AFJ24}, and this work ($2$ non-zeros $+k$, shaded in green).}
    \label{fig_cont}
\end{figure}

Secondly, we work on IPs over a class of constraint matrices \emph{with a constant number of extra rows}.
Such problems appear naturally within combinatorial optimization or operations research, where the extra rows can play the role of budget or exactness constraints.
Since already the knapsack problem is $\mathrm{NP}$-hard, which only has one budget-constraint, one typically needs bounds on the entries or subdeterminants to design polynomial-time algorithms for such problems.
We specifically mention the exact matching problem~\cite{PY82}, integer programming on few constraints~\cite{EW19,AHO20}, the exact matroid basis problem~\cite{LX24,ERW24}, integer programming over TU matrices with few extra constraints~\cite{AFJ24}, or the partially ordered knapsack problem~\cite{KS07}.

One important problem in the class of IPs with bounded subdeterminants and two non-zeros per row is the maximum stable set problem with bounded \emph{odd cycle packing number}, i.e., there is a bound on the maximum number of disjoint odd cycles. 
This problem has an efficient algorithm due to Fiorini et al.~\cite{FJWY25}, which we use in order to solve our base case.
Two exemplary problems that are covered by Problem~\ref{prob_general} are the budgeted stable set problem studied by Dhahan~\cite{Dha20}, and the partially ordered knapsack problem, where we further restrict to bounded subdeterminants.

\subsection{Overview}
We begin by setting up definitions that are used throughout the paper in \Cref{sec:prelim}.
In \Cref{sec:reduction}, we reduce Problem~\ref{prob_general} to Problem~\ref{prob_reduced}.
We proceed to exploit the bound on the subdeterminants in order to obtain bounds on the parameters of a related graph, as well as a structural description of such graphs in \Cref{sec:subdet,sec:struct}.
We conclude the proof of our main result in \Cref{sec:dp} by providing a dynamic program.
In the final \Cref{sec:conclusion}, we mention related problems and directions for further research.

\section{Preliminaries}\label{sec:prelim}
Throughout the paper, we use the notation $[k]:=\{1,\dots,k\}$.
Given an integer matrix $A\in\Z^{m\times n}$, we denote its largest subdeterminant by 
\[\Delta(A):=\max\{|\det(B)|:B\text{ is a submatrix of }A\}.\]
We refer to Diestel~\cite{Die10} for more background on the notions from graph theory that we use.
We proceed to give an overview of the most important tools and definitions we use in the remainder of the paper.

Given a graph $G=(V,E)$, and a subset of vertices $U\subseteq V$, we define the \emph{neighborhood} $N(U)\subseteq V-U$ of $U$ in $G$, such that for every $v\in N(U)$, there is an edge $vv'\in E$ with $v'\in U$.
Given an edge $e=vv'$ of $G$, we denote by $G/e$ the graph obtained by \emph{contracting} an edge $e$, where $v$ and $v'$ are replaced by one vertex $v_e$ that is adjacent to all vertices in $N(\{v,v'\})$.
We say for graphs $G$ and $H$ that $G$ contains an \emph{$H$-minor}, if we can obtain a graph isomorphic to $H$ by deleting vertices, deleting edges, and contracting edges of $G$.
We call a graph $G$ a \emph{subdivision} of a graph $H$, if we can obtain a graph isomorphic to $G$ by replacing some of the edges of $H$ by paths.

Given a graph $G$, we define a {\em tree-decomposition} of $G$ as a tuple $(T,\mathcal{B})$, such that $T$ is a tree, and a collection of bags $\mathcal{B}=(B_t)_{t\in V(T)}$ where $B_t\subseteq V(G)$ for each $t\in V(T)$ is a subset of the vertices of $G$ such that the following properties are satisfied: 
\begin{itemize}
    \item $\bigcup_{t\in V(T)}B_t=V(G)$,
    \item for each edge $vv'\in E(G)$, there is a vertex $t\in V(T)$ such that $\{v,v'\}\subseteq B_t$, and
    \item for each $v\in V(G)$, the set $\{t\in V(T):v\in B_t\}$ induces a connected graph in $T$.
\end{itemize}
If $T$ is a star, we also call $(T,\mathcal{B})$ a {\em star-decomposition}.
We define the \emph{adhesion-size} of a tree-decomposition as $\ \max\{|B_t\cap B_{t'}|:tt'\in E(T)\}$.
Note that we allow $G$ to not be connected, and $B_t$ to be empty for some $t\in V(T)$. 

\section{Reduction}\label{sec:reduction}
\begin{theorem}[Cook, Gerards, Schrijver, \& Tardos~\cite{CGST86}]\label{thm_cookproximity}
    There exists a function $f_{\ref{thm_cookproximity}}:\Z^2\to\Z$ with the following properties.
    Fix $\Delta\in\N$ and consider an integer program as in~\eqref{eq_ip}.
    If $M$ is totally $\Delta$-modular, \eqref{eq_ip} is feasible, and $x^*$ is an optimal solution of the linear relaxation, then there exists an optimal solution $z^*$ of~\eqref{eq_ip} with $\|x^* - z^*\|_\infty \le f_{\ref{thm_cookproximity}}(n,\Delta)$.
\end{theorem}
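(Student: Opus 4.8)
The plan is to prove this via a standard "Steinitz-type" / conformal decomposition argument applied to the fractional part of the LP optimum. Let $x^*$ be an optimal vertex solution of the linear relaxation and let $z^*$ be an optimal integral solution of \eqref{eq_ip} minimizing $\|x^* - z^*\|_1$ (such a $z^*$ exists whenever the IP is feasible and bounded; boundedness in the relevant direction follows from the existence of the LP optimum together with a separate argument, or one argues with an optimal solution of a suitably truncated program). First I would decompose the vector $z^* - x^* \in \R^n$ into a sum of vectors that are "conformal" to it, i.e.\ write $z^* - x^* = \sum_i \lambda_i g_i$ where each $g_i$ lies in the same orthant as $z^* - x^*$, the $\lambda_i > 0$, and each $g_i$ is either $\pm$ a generator of the cone $\{y : M_I y \le 0\}$ for the set $I$ of constraints tight at $x^*$, or an integral circuit-type vector. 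Concretely I would appeal to the fact that the cone $\{ y : M y \le M z^* - b_{\le}\,\}$-type region decomposes so that $z^* - x^*$ can be split as $h + w$ where $h$ is a nonnegative combination of extreme rays of the tangent cone of the LP feasible region at $x^*$, and $w$ is an integral vector in the same orthant whose components are bounded.

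The key steps, in order, are: (1) set up the tangent cone $C$ of the polyhedron $\{x : Mx \le b\}$ at the LP-optimal vertex $x^*$; since $c^\intercal y \ge 0$ for all $y \in C$ (optimality of $x^*$), any ray we peel off does not increase the objective. (2) Use integer Carathéodory / the Steinitz lemma: decompose $z^* - x^*$ as a conformal sum in which the "integral part" has $\ell_\infty$-norm bounded by a function of $n$ and of the largest absolute value of an $n\times n$ subdeterminant of $M$ — this is where total $\Delta$-modularity enters, bounding the denominators of the extreme rays of $C$ and the size of the relevant integral vectors by $\Delta$, hence giving a bound $f_{\ref{thm_cookproximity}}(n,\Delta)$. (3) Having written $z^* - x^* = h + w$ with $h$ in the recession-cone direction and $w$ bounded and conformal, observe that $x^* + h$ is still LP-feasible and $z^* - h$ is still integral and still feasible (feasibility is preserved because $h$ points "away" along tight constraints and conformality ensures the non-tight constraints are not violated), and $c^\intercal(z^* - h) \le c^\intercal z^*$. (4) By minimality of $\|x^* - z^*\|_1$ among optimal integral solutions, conclude $h = 0$, so $z^* - x^* = w$ and hence $\|z^* - x^*\|_\infty = \|w\|_\infty \le f_{\ref{thm_cookproximity}}(n,\Delta)$.

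The main obstacle is step (2): making the conformal/Steinitz decomposition precise and extracting an explicit bound on the integral remainder in terms of $n$ and $\Delta$ only. One must be careful that the extreme rays of the tangent cone $C$ are generated by vectors whose entries are ratios of $n\times n$ subdeterminants of $M$ (by Cramer's rule applied to the submatrix of tight rows), so after clearing denominators each primitive ray generator has $\ell_\infty$-norm at most $\Delta$; then the Steinitz lemma (or the integer Carathéodory bound) caps the number of generators used and the size of the leftover integral vector, yielding $f_{\ref{thm_cookproximity}}(n,\Delta) = n \cdot \Delta$ or similar. A secondary subtlety is justifying that the minimizing integral optimum $z^*$ exists — i.e.\ handling potential unboundedness of $\{x : Mx \le b\}$ in directions where $c$ is zero — which can be dispatched by restricting to the face of the LP optimal solutions or by a perturbation argument, and this should be mentioned but not belabored.
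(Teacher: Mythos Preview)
The paper does not prove this theorem; it is quoted as a known result from \cite{CGST86} (with the bound $f_{\ref{thm_cookproximity}}(n,\Delta)\le n\Delta$ stated immediately afterwards), so there is no in-paper proof to compare against. Your sketch is broadly the standard Cook--Gerards--Schrijver--Tardos argument and would go through, with one clarification and one correction.

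The clarification: the cone in which you should decompose $z^*-x^*$ is not merely the tangent cone $\{y : M_I y \le 0\}$ at $x^*$, but rather the pointed cone obtained by additionally splitting \emph{all} rows of $M$ according to the sign of $M_i(z^*-x^*)$ and intersecting with the closed orthant containing $z^*-x^*$. You gesture at this with the word ``conformal,'' but it is exactly this sign-splitting (not just conformality in the orthant sense) that makes your step (3) work for the \emph{non-tight} constraints --- the tangent cone alone says nothing about those rows, so feasibility of $z^*-h$ and $x^*+h$ there does not follow without it.

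The correction: the Steinitz lemma is not the right tool (it concerns rearranging partial sums to control prefix norms). What you actually need, and also mention, is Carath\'eodory for pointed cones: the sign-constrained cone above is pointed, so $z^*-x^*$ is a nonnegative combination of at most $n$ extreme rays, each of which by Cramer's rule is integral with $\ell_\infty$-norm at most $\Delta$. Setting $g=\sum_i \lfloor\lambda_i\rfloor g_i$ gives an integral $g$ with $\|(z^*-x^*)-g\|_\infty\le n\Delta$ directly, and $z^*-g$ is then a feasible integral optimum at the required distance. This makes the $\ell_1$-minimality device in your step (4) unnecessary, and also dispatches the existence issue you flag at the end: you do not need a closest integral optimum to exist a priori, you construct one.
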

Cook, Gerards, Schrijver, \& Tardos~\cite{CGST86} show that $f_{\ref{thm_cookproximity}}(n,\Delta)\le n\Delta$.
This was recently improved by Celaya, Kuhlmann, Paat \& Weismantel~\cite{CKPW23} to $f_{\ref{thm_cookproximity}}(n,\Delta)<\frac{4n+2}{9}\Delta$.
If the constraint matrix has a special structure, it is possible to give a bound not depending on the dimension of the problem, see e.g. Aliev, Henk \& Oertel~\cite{AHO20}, or Aprile et al.~\cite{AFJ24}.
In fact, it is not clear that the dimension should appear in the proximity function at all.

The following statement relates the two problems from the introduction.
Clearly, any instance of Problem~\ref{prob_reduced} is also an instance of Problem~\ref{prob_general}.
But in fact, there is a polynomial-time reduction from Problem~\ref{prob_general} to Problem~\ref{prob_reduced}.
Note that we need to solve a polynomial number of instances of Problem~\ref{prob_reduced} in order to solve Problem~\ref{prob_general}.

\begin{proposition}\label{prop_reduction}
    If there is a polynomial-time algorithm to solve Problem~\ref{prob_reduced}, then there is a polynomial-time algorithm for Problem~\ref{prob_general}.
\end{proposition}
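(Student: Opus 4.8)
The plan is to transform an arbitrary instance of Problem~\ref{prob_general} into polynomially many instances of Problem~\ref{prob_reduced} by dealing with the two sources of discrepancy: the matrix $A$ in Problem~\ref{prob_general} may have rows with fewer than two non-zeros (whereas Problem~\ref{prob_reduced} requires exactly two per row in its top block $A$), and more importantly the bottom block $\smallmat{C & D}$ in Problem~\ref{prob_general} is a completely general collection of at most $k$ rows, while in Problem~\ref{prob_reduced} the bottom block $W$ is coupled to an equality constraint $Wx = d$ together with explicit box constraints $\ell \le x \le u$ on all variables. So the reduction must (a) absorb the $n_2 \le k$ ``special'' columns, (b) turn the $m_2 \le k$ inequality rows $Cx + Dy \le b'$ into equalities by introducing bounded slack variables, and (c) produce finite bounds $\ell, u$ on every variable.

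First I would handle the special columns. Write $M = \smallmat{A & B\\ C & D}$ with $B \in \Z^{m_1 \times n_2}$, $D\in\Z^{m_2\times n_2}$, and $n_2 \le k$. Using \Cref{thm_cookproximity} (with the Celaya--Kuhlmann--Paat--Weismantel bound, or even just the $n\Delta$ bound), any feasible instance has an optimal integer solution within $\ell_\infty$-distance $f_{\ref{thm_cookproximity}}(n,\Delta)$ of an optimal LP vertex; since a vertex of the LP relaxation has encoding size polynomial in the input, each coordinate $x^*_i$ of an optimal integer solution lies in an interval of polynomially-bounded integers. This already gives the box constraints $\ell \le x \le u$ needed in Problem~\ref{prob_reduced}; I would add these explicitly. (If the LP relaxation is infeasible then \eqref{eq_ip} is infeasible and we are done; if it is unbounded, a standard argument over the recession cone lets us detect unboundedness or reduce to the bounded case.) For the $n_2$ special variables we could alternatively guess their values by brute force, but that is not polynomial unless their ranges are constant; instead we keep them as genuine variables and simply note that after adding box constraints they are legitimately columns of the matrix $W$ in Problem~\ref{prob_reduced}, because Problem~\ref{prob_reduced} places no structural restriction on the columns touched only by $W$.

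Next I would convert the $m_2$ general inequality rows into equalities. For each such row, introduce a non-negative slack variable $s_j \ge 0$ so that $(Cx+Dy)_j + s_j = b_j$; the box constraints derived above, together with integrality of $M$, bound each $s_j$ from above by a polynomially-sized integer, so we may also impose $s_j \le u_{s_j}$. These $\le k$ slack columns, like the special columns, appear only in the new equality block, so the top block of the resulting matrix is exactly $A$ (after the cleanup below), and the bottom block $W$ consists of the rows of $\smallmat{C & D & 0}$ for the former inequalities plus the rows of $\smallmat{A & B & 0}$-type rows — wait, more carefully: the rows of $A$ with two non-zeros stay on top; rows of $A$ with fewer than two non-zeros, being trivial box-type constraints, are folded into $\ell,u$ or into $W$. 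I would also argue the total $\Delta$-modularity is preserved: appending unit columns for the slacks, and moving rows between blocks, does not increase the maximum absolute value of a subdeterminant — a short Laplace-expansion argument handles the unit columns, since expanding along a slack column reduces to a subdeterminant of the original $M$. Finally, rows of $A$ with zero or one non-zero contribute only bounds, which are merged into $\ell,u$, so after this cleanup every remaining row of the top block has exactly two non-zeros, as required. Collecting everything, the new instance has the form of Problem~\ref{prob_reduced} with a bottom block of at most $m_2 + m_1' + n_2 + m_2 \le \mathrm{poly}(k)$-many rows — here one must be slightly careful that the number of equality rows $k'$ in the produced instance is still a constant depending only on $k$, which it is since $m_2, n_2 \le k$ and the number of ``demoted'' rows of $A$ is absorbed into bounds rather than into $W$.

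The main obstacle I anticipate is the bookkeeping around which rows of $A$ must move into the equality block $W$: a row of $A$ with exactly one non-zero is a bound and can be absorbed into $\ell,u$, but we must make sure that doing so for all such rows still leaves the block structure consistent and does not blow up the parameter $k$ of the target instance, nor destroy total $\Delta$-modularity. A secondary subtlety is that the reduction is a \emph{Turing} reduction: to recover an optimal solution of the original problem we must be able to read it off from the solution of the Problem~\ref{prob_reduced} instance (straightforward — just drop the slack coordinates), but if we took the route of enumerating any quantities (e.g. signs of certain variables, or the feasibility/unboundedness status), we solve a polynomial number of Problem~\ref{prob_reduced} instances and take the best, which is exactly the Turing reduction flagged before the statement. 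Everything else — preserving subdeterminant bounds under adding identity columns, deriving polynomial box bounds from \Cref{thm_cookproximity}, handling infeasibility and unboundedness — is routine.
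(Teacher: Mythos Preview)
Your proposal has a genuine gap in the handling of the $n_2\le k$ special columns. You assert that brute-force guessing of these variables ``is not polynomial unless their ranges are constant,'' but this is exactly what proximity buys you: after recentering around an LP optimum, each variable lies in $[-f_{\ref{thm_cookproximity}}(n,\Delta),\,f_{\ref{thm_cookproximity}}(n,\Delta)]$, so there are at most $(2f_{\ref{thm_cookproximity}}(n,\Delta)+1)^{n_2}\le(2n\Delta+1)^k$ choices, which is polynomial because $\Delta$ and $k$ are fixed. This is precisely the route the paper takes. Your alternative---retaining the $y$-variables as ``columns touched only by $W$''---does not work because of the block $B$: the $y$-variables enter the top $m_1$ rows through $B$, so the top block becomes $[A\ B]$, whose rows may have more than two non-zeros, violating the structural hypothesis of Problem~\ref{prob_reduced}. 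Nor can those rows be moved into $W$, since there are $m_1$ of them, not a constant number. The special columns must therefore be eliminated by enumeration, not carried along.

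A second omission is the reduction of $A$'s entries to $\{0,\pm1\}$. In Problem~\ref{prob_general} the block $A$ merely has at most two non-zeros per row; its entries may be any integers of absolute value up to $\Delta$. Problem~\ref{prob_reduced} requires $A\in\{0,\pm1\}^{m\times n}$. The paper delegates this step to~\cite[Lemma~4]{FJWY22}, which again proceeds by guessing a constant number of variables; your proposal never addresses it. The remaining parts of your outline---introducing bounded slack variables for the $m_2$ extra inequalities, absorbing rows of $A$ with at most one non-zero into $\ell,u$, and checking that total $\Delta$-modularity survives appending unit columns---are fine and consistent with the paper's approach.
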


The proof of the statement is based on the proximity result by Cook et al.~\cite{CGST86}, see \Cref{thm_cookproximity}.
This theorem allows us to guess on a constant number of variables in any totally $\Delta$-modular instance of \eqref{eq_ip}.
Fiorini et al.~\cite[Lemma~4]{FJWY25} show that $A$ has at most $\log_2(\Delta)$ columns with entries larger than $1$ in absolute value.
Thus, we can efficiently enumerate all possible assignments to the corresponding variables.
Therefore, we can solve a polynomial number of instances of Problem~\ref{prob_reduced} in order to solve an instance of Problem~\ref{prob_general}.

In addition, we recenter the problem around $\zero_n$, i.e., given a solution $x^*\in\R^n$ of the linear relaxation, we redefine our variables $x':=x-\lfloor x^*\rfloor$.
By \Cref{thm_cookproximity} there is an optimal solution $z^*$ to the integer program, such that $\|z^*\|_\infty\le f_{\ref{thm_cookproximity}}(n,\Delta)$.
Thus, we can assume without loss of generality that the lower and upper bounds have bounded entries, i.e., $\ell,u\in[-f_{\ref{thm_cookproximity}}(n,\Delta),f_{\ref{thm_cookproximity}}(n,\Delta)]^n\cap\Z^n$.

For a detailed description of the reduction, we refer to Fiorini et al.~\cite{FJWY25}, see Lemma~4 and the following discussion.

\section{Subdeterminants}\label{sec:subdet}
Let $M=\smallmat{A\\w^\intercal}$, where $A\in\{0,\pm1\}^{m\times n}$ with at most two non-zeros per row, and $w\in\Z^n$.
In this section, we show a relationship between the subdeterminants of $M$, and the structure of a corresponding graph.

Let $G=(V,E)$ be a graph, and $\sigma:E\to\{-1,+1\}$ be a signing of its edges.
We call the triple $S=(V,E,\sigma)$ a \emph{signed graph}.
We call a subgraph $S'$ of $S$ \emph{odd} if $\prod_{e\in E(S')}\sigma(e)=-1$, and \emph{even} otherwise.
Observe that if $\sigma(e)=-1$ for all $e\in E$, then the notion of an \emph{odd cycle} coincides with the notion of an odd cycle for unsigned graphs.
The \emph{odd cycle packing number} of a signed graph $S$ is denoted by $\mathrm{ocp}(S)$, and is defined as the maximum number of vertex-disjoint odd cycles.
For a signed graph $S$, we define an edge-vertex incidence matrix $M\in\{0,\pm1\}^{E\times V}$ in the following way.
Given an edge $e=vv'$, $M_{e,v}M_{e,v'}=-\sigma(e)$, i.e., if $e$ is odd, then $M_{e,v}=+1$ and $M_{e,v'}=+1$, or $M_{e,v}=-1$ and $M_{e,v'}=-1$, and if $e$ is even, then $M_{e,v}=+1$ and $M_{e,v'}=-1$, or $M_{e,v}=-1$ and $M_{e,v'}=+1$.
For $v''$ not incident to $e$, we set $M_{e,v''}=0$.
Note that given a signed graph, its edge-vertex incidence matrix is not necessarily unique, but any $\{0,\pm1\}$-matrix with two non-zeros per row is the edge-vertex incidence matrix of a signed graph, and thus has a unique associated signed graph.

The following result is well-known.
\begin{proposition}[Grossman, Kulkarni, \& Schochetman~\cite{GKS95}]\label{prop_ocp}
    Let $A\in\{0,\pm1\}^{m\times n}$ be the edge-vertex incidence matrix of a signed graph $S$.
    Then $2^{\mathrm{ocp(S)}}=\Delta\left(A\right)$.
\end{proposition}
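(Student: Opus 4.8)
The plan is to determine, for an arbitrary square submatrix $A[F,U]$ of $A$ indexed by an edge subset $F$ and a vertex subset $U$ with $|F|=|U|$, the exact value of $|\det A[F,U]|$, and then to maximise over all such choices. First I would dispose of the degenerate cases: if some $v\in U$ is not an endpoint of any edge in $F$, that column of $A[F,U]$ is zero, so $\det A[F,U]=0$; likewise, if some $e\in F$ has no endpoint in $U$, that row is zero. So it suffices to treat the case $U\subseteq V(F)$ in which every edge of $F$ meets $U$, where $V(F)$ denotes the set of endpoints of the edges in $F$.

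Next I would split $A[F,U]$ along the connected components of the subgraph $H:=(V(F),F)$. Every column of $A[F,U]$ has its nonzero entries within the rows of a single component of $H$, and every row belongs to a single component; hence, after reordering, $A[F,U]$ is block diagonal with one block $A[F_K,\,V_K\cap U]$ per component $K=(V_K,F_K)$ of $H$. Consequently $\det A[F,U]=0$ unless $|F_K|=|V_K\cap U|$ for every component $K$, and in that case $|\det A[F,U]|=\prod_K|\det A[F_K,V_K\cap U]|$. Since $K$ is connected and has no isolated vertices, $|V_K|-1\le|F_K|=|V_K\cap U|\le|V_K|$, which leaves exactly two possibilities for each component: either (a) $K$ is a tree and exactly one of its vertices is omitted from $U$, or (b) $K$ is unicyclic, i.e.\ $|F_K|=|V_K|$, and $V_K\subseteq U$.

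The heart of the argument is to evaluate the block determinant in each case by repeatedly \emph{pruning leaves}: whenever $K$ has a degree-one vertex $v$ whose column is present in the block, that column has a single $\pm1$ entry, so expanding the determinant along it changes its absolute value by a factor of $1$ and deletes $v$ together with its pendant edge, yielding a smaller instance of the same type. In case (a), a tree on at least two vertices has at least two leaves while only one vertex is omitted, so one can always prune, eventually reaching the empty matrix; hence $|\det A[F_K,V_K\cap U]|=1$, and $K$ contains no odd cycle. In case (b) the pruning terminates at the unique cycle $C$ of $K$, with all of its vertices retained, so $|\det A[F_K,V_K]|=|\det M_C|$, where $M_C$ is the square edge-vertex incidence matrix of $C$. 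The final base case is to compute $\det M_C$ directly---for instance by a cofactor expansion along the first column, or by noting that the alternating sign pattern around $C$ gives a kernel vector precisely when $C$ is even---obtaining $\det M_C=0$ if $C$ is even and $\det M_C=\pm2$ if $C$ is odd. Combining these, $|\det A[F,U]|$ equals $0$ whenever some component of $H$ violates the size condition or is unicyclic with an even cycle, and otherwise equals $2^{t}$, where $t$ is the number of components of $H$ that are unicyclic with an odd cycle.

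It remains to maximise over $F,U$. For the upper bound, the $t$ odd cycles produced above lie in distinct components of $H$ and are therefore pairwise vertex-disjoint, so $t\le\mathrm{ocp}(S)$ and hence $\Delta(A)\le 2^{\mathrm{ocp}(S)}$. For the matching lower bound, start from $\mathrm{ocp}(S)$ pairwise vertex-disjoint odd cycles, let $F$ be the union of their edge sets and $U$ the union of their vertex sets; then $A[F,U]$ is block diagonal with one odd-cycle block per cycle, so $|\det A[F,U]|=2^{\mathrm{ocp}(S)}$ and $\Delta(A)\ge 2^{\mathrm{ocp}(S)}$. I expect the main obstacle to be the structural bookkeeping---identifying exactly which square submatrices can be nonsingular (the component size constraints and the leaf-pruning reduction) and pinning down the cycle base case; the rest is routine linear algebra.
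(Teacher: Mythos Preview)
Your argument is correct and follows the standard route to this classical result. Note, however, that the paper does not supply its own proof of this proposition: it is stated as well-known with a citation to Grossman, Kulkarni, and Schochetman, so there is nothing in the paper to compare your approach against. Your block-diagonalisation along connected components, followed by leaf-pruning to reduce each nonsingular block either to the empty matrix (tree case, contributing a factor~$1$) or to the incidence matrix of a single cycle (unicyclic case), together with the direct evaluation $|\det M_C|\in\{0,2\}$ according to the signed parity of~$C$, is precisely the argument one finds in the original reference.
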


Let $S$ be a signed graph, denote the set of subgraphs of $S$ that are trees by $\mathcal{T}(S)$ and let $T\in\mathcal{T}(S)$.
For $v,v'$ vertices of $T$, we denote the edges of the unique $v-v'$-path in $T$ by $P_T(v,v')$.
Note that $T$ induces a partition on $V(T)$ as follows.
We define $X_T,Y_T\subseteq V(T)$, such that $X_T\cap Y_T=\emptyset$, $X_T\cup Y_T=V(T)$, any even edge of $T$ is in $X_T$ or $Y_T$, and any odd edge of $T$ crosses between $X_T$ and $Y_T$.
Hence, $P_T(v,v')$ is even if and only if $v$ and $v'$ are both in $X_T$ or in $Y_T$.
This partition can be found greedily.

\begin{definition}
    Let $S=(V,E,\sigma)$ be a signed graph and $w:V\to\Z$ be a weight function on the vertices of $S$.
    Let $T\in\mathcal{T}(S)$.
    We define the \emph{alternating weight of a tree} as 
    \[
        \mathrm{alt}(T,w):=\left|\sum_{v\in X_T}w(v)-\sum_{v\in Y_T}w(v)\right|.
    \] 
    We define $\mathrm{alt}(S,w):=\max_{T\in\mathcal{T}(S)}\mathrm{alt}(T,w)$ to be the maximum alternating weight of a tree in $S$.
\end{definition}
Note that $\mathrm{alt}(T,w)$, and $\mathrm{alt}(S,w)$ are well-defined, since the bipartition is unique up to exchanging the sides, and we take the absolute value.
The alternating weight of a tree appears naturally in the context of subdeterminants.

\begin{proposition}\label{prop_alt}
    Let $A\in\{0,\pm1\}^{m\times n}$ be the edge-vertex incidence matrix of a signed graph $S=(V,E,\sigma)$ and $w:V\to\Z$ be a weight function on the vertices of $S$.
    Then $\mathrm{alt}(S,w)\le\Delta\left(\smallmat{A\\w^\intercal}\right)$.
\end{proposition}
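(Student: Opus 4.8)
The plan is to exhibit, for any tree $T\in\mathcal{T}(S)$, a square submatrix of $\smallmat{A\\w^\intercal}$ whose determinant equals $\pm\,\mathrm{alt}(T,w)$; taking the maximum over $T$ then yields the bound. First I would reduce to the case where $T$ is a spanning tree of the vertex set $V(T)$ on which it lives: restrict attention to the $|V(T)|$ columns of $A$ indexed by $V(T)$ and the $|V(T)|-1$ rows indexed by $E(T)$, and adjoin the single row $w^\intercal$ restricted to those columns. This gives a $|V(T)|\times|V(T)|$ matrix $N=\smallmat{A'\\ (w')^\intercal}$, where $A'$ is the edge-vertex incidence matrix of the signed tree $(V(T),E(T),\sigma|_{E(T)})$, and I claim $|\det N|=\mathrm{alt}(T,w)$.

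To compute $\det N$, I would expand along the last row: $\det N=\sum_{v\in V(T)} \pm\, w(v)\cdot \det(A'_{\hat v})$, where $A'_{\hat v}$ is $A'$ with the column $v$ deleted. The key observation is that $A'_{\hat v}$ is a square $(|V(T)|-1)\times(|V(T)|-1)$ matrix whose rows are the edges of the tree $T$ and whose columns are $V(T)-v$; since $T$ is a tree, this is (up to sign) a matrix that can be brought to triangular form by repeatedly picking off leaves of $T$ (a leaf $u\neq v$ contributes a row with a single nonzero entry $\pm1$ in column $u$), so $\det(A'_{\hat v})=\pm1$ for every $v$. Hence $\det N=\sum_{v\in V(T)}\varepsilon_v\, w(v)$ with each $\varepsilon_v\in\{\pm1\}$, and the heart of the argument is to show that the sign pattern $(\varepsilon_v)_v$ is exactly the alternating pattern, i.e. $\varepsilon_v$ is constant on $A_T$ and on $B_T$ and flips between them. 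For this I would track how the sign of the leaf-elimination cofactor changes as one moves from one vertex to an adjacent one across a tree edge $e=vv'$: eliminating toward $v$ versus toward $v'$ differs by exactly the relation $M_{e,v}M_{e,v'}=-\sigma(e)$ encoded in $A'$, so the two cofactor signs agree iff $e$ is even and differ iff $e$ is odd — which is precisely the defining property of the bipartition $(A_T,B_T)$. Putting this together gives $|\det N|=\bigl|\sum_{v\in A_T}w(v)-\sum_{v\in B_T}w(v)\bigr|=\mathrm{alt}(T,w)$, and since $N$ is a square submatrix of $\smallmat{A\\w^\intercal}$, this determinant is at most $\Delta\bigl(\smallmat{A\\w^\intercal}\bigr)$ in absolute value; maximizing over $T\in\mathcal{T}(S)$ finishes the proof.

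The main obstacle is making the sign bookkeeping in the cofactor computation fully rigorous — i.e. proving cleanly that every minor $\det(A'_{\hat v})$ is $\pm1$ and that the resulting signs realize the alternating pattern rather than merely asserting it. I would handle this by an induction on $|V(T)|$, peeling off a leaf $\ell$ of $T$ with neighbor $p$: expanding $\det N$ along the column $\ell$ leaves, after deleting the incident edge-row $e=\ell p$, a matrix of the same form on $T-\ell$ but with the weight of $p$ updated to $w(p)\pm w(\ell)$, the sign being $+$ if $e$ is even and $-$ if $e$ is odd; this exactly mirrors the recursive construction of the bipartition $(A_T,B_T)$ "found greedily" as noted in the text, so the induction hypothesis applied to $T-\ell$ delivers the alternating sum and the leaf contributes the correct sign. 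One small case to dispatch separately is a single-vertex tree $T$ (no edges), where $N=[\,w(v)\,]$ and $\mathrm{alt}(T,w)=|w(v)|=|\det N|$, serving as the base case.
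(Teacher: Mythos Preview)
Your proposal is correct, and the leaf-peeling induction you settle on in the last paragraph is essentially the paper's proof: the paper performs an elementary row operation using the edge row $e=\ell p$ to zero out the $w$-entry in column $\ell$, then deletes row $e$ and column $\ell$, arriving at the same matrix for $T-\ell$ with the neighbor's weight updated to $w(p)+\sigma(e)\,w(\ell)$ (i.e.\ $+$ for even, $-$ for odd, exactly as you write). Your preliminary cofactor-expansion sketch is a valid alternative viewpoint, but the paper bypasses the global sign bookkeeping by going straight to the induction, just as you ultimately propose.
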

\begin{proof}
    We show that any tree $T$ of $S$ gives rise to a subdeterminant of absolute value $\mathrm{alt}(T,w)$ within the matrix $\smallmat{A\\w^\intercal}$.
    We prove this by induction on the number of vertices of $T$.
    Given a tree $T$, we denote by $M$ the square submatrix of $\smallmat{A\\w^\intercal}$, where the rows are indexed by $E(T)$ and $w$, and the columns are indexed by $V(T)$.

    If $T$ has just a single vertex $v_1$ then $\mathrm{alt}(T,w)=|w(v_1)|=|\det(M)|$.
    Assume that $T$ has $t\ge2$ vertices.
    Then $T$ has a leaf $v_1$ with an edge $e=v_1v_2$ to some vertex $v_2\in V(T)$.

    Recall for the following that for an edge $e=(v,v')$, $\sigma(e)=-M_{e,v}M_{e,v'}$.
    In $M$, we subtract the row indexed by $e$ scaled with a factor of $w(v_1)\cdot M_{e,v_1}$ from the row indexed by $w$ to obtain $M'$.
    To be precise, we define $M'_{w,v_1}:=0$, $M'_{w,v_2}:=w(v_2)+\sigma(e)\cdot w(v_1)$, and $M'_{i,j}:=M_{i,j}$ otherwise, for $i\in E(T)\cup \{w\}$, and $j\in V(T)$.
    Since $M'$ is obtained from an elementary row operation on $M$, we observe that $\det(M')=\det(M)$.
    In $M'$, the column indexed by $v_1$ has exactly one non-zero entry of absolute value one, so we can remove the row indexed by $e$ and the column indexed by $v_1$ to obtain a matrix of size $(t-1)\times(t-1)$ while preserving the determinant of $M'$ in absolute value.

    We define $T':=T-\{v_1\}$, and $w':V(T')\to\Z$, where $w'(v_2):=w(v_2)+\sigma(e)\cdot w(v_1)$, and $w'(v):=w(v)$ for $v\in V(T')-\{v_2\}$.
    By the induction hypothesis, it follows that
    \[
        |\det(M)|=|\det(M')|=\left|\sum_{v\in A_{T'}}w'(v)-\sum_{v\in B_{T'}}w'(v)\right|=\left|\sum_{v\in X_T}w(v)-\sum_{v\in Y_T}w(v)\right|.
    \]
    \qed
\end{proof}

We remark that \Cref{prop_alt} also follows from the \emph{max-circuit imbalance} of an associated matrix, see~\cite[Proposition~3.2]{ENV22} for a relation between subdeterminants and the imbalance of circuits.
In contrast, we are not aware of a proof of \Cref{prop_ocp} based solely on the structure of circuits of the matrix.

\section{Structure}\label{sec:struct}
Given a graph $G=(V,E)$ and $k$ weight-functions $w_1,\dots,w_k:V\to\Z$, we define an associated set of terminals $R:=\{v\in V\mid\exists i\in[k]:w_i(v)\neq0\}$ as the set of vertices whose weight is non-zero with respect to at least one of the weight functions.
For an integer $r\in\N$, we call $G$ \emph{$r$-decorated} if $G$ is a tree, and at least $r$ of its leaves are terminals.

\begin{lemma}\label{lemma_decorated}
    Let $S=(V,E,\sigma)$ be a signed graph, $k,r\in\N$, and let $w_1,\dots,w_k:V\to\Z$ be $k$ weight-functions on the vertices.
    Let $G:=(V,E)$, and $R:=\{v\in V\mid\exists i\in[k]:w_i(v)\neq0\}$.
    If $G$ has an $r$-decorated tree as a subgraph, then there is some $i\in[k]$ such that $\mathrm{alt}(S,w_i)\ge\frac{r}{2k}$.
\end{lemma}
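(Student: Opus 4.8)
The plan is to take an $r$-decorated tree $T' \subseteq G$ and extract from it a subtree $T$ whose alternating weight, with respect to at least one of the weight functions, is large. The bipartition $A_T, B_T$ (which depends only on the signed-tree structure, not on the weights) assigns each terminal leaf to one of two sides; the alternating weight $\mathrm{alt}(T, w_i) = |\sum_{A_T} w_i - \sum_{B_T} w_i|$ measures how unbalanced the $w_i$-mass is across this bipartition. The obstacle to overcome is that cancellation could occur in two ways: between different terminals (some contributing positively, others negatively to the same side), and across the $k$ weight functions (a vertex might have large $w_1$ but the ``wrong'' side, and large $w_2$ on the right side, etc.). The $1/(2k)$ loss in the bound is exactly what one pays for handling these two sources of cancellation.

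First I would prune $T'$ down to the minimal subtree $T$ spanning its terminal leaves, so that every leaf of $T$ is a terminal; this does not change the alternating weight bound we seek since $\mathrm{alt}(T, w_i) \le \mathrm{alt}(S, w_i)$ for any subtree, and $T$ still has at least $r$ terminal leaves. Now, since each such leaf $v$ is a terminal, there is at least one index $i(v) \in [k]$ with $w_{i(v)}(v) \neq 0$, hence $|w_{i(v)}(v)| \ge 1$. By pigeonhole, some fixed index $i \in [k]$ satisfies $i(v) = i$ for at least $r/k$ of the terminal leaves; fix this $i$ and call this set of leaves $L$, so $|L| \ge r/k$ and $|w_i(v)| \ge 1$ for each $v \in L$. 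Each $v \in L$ lies in $A_T$ or in $B_T$, and contributes $w_i(v)$ with a sign determined by its side. Splitting $L$ according to which side the vertex is on and the sign of $w_i(v)$, by pigeonhole again at least $|L|/2 \ge r/(2k)$ of these leaves contribute with the \emph{same} sign to the quantity $\sum_{A_T} w_i - \sum_{B_T} w_i$.

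The remaining difficulty is that the non-leaf vertices of $T$, and the leaves in $L$ contributing with the opposite sign, could in principle cancel this contribution. This is where I would invoke the freedom to choose the subtree more cleverly: rather than using all of $T$, select a subtree $T''$ of $T$ containing only the $r/(2k)$ ``good'' leaves and no other terminals — concretely, contract or delete so that the only terminals in $T''$ are these good leaves. Since the bipartition of a tree restricts consistently to subtrees (the parity of a path is intrinsic), the good leaves keep their sides and signs in $T''$, and all other vertices of $T''$ are non-terminals, contributing $0$ to every $w_j$. Hence $\mathrm{alt}(T'', w_i) = |\sum_{v \text{ good}} \pm w_i(v)| = \sum_{v \text{ good}} |w_i(v)| \ge |L|/2 \ge r/(2k)$, where the middle equality holds because all the surviving nonzero contributions have the same sign. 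Therefore $\mathrm{alt}(S, w_i) \ge \mathrm{alt}(T'', w_i) \ge r/(2k)$, as claimed. The one point requiring care is checking that the subtree spanning a chosen subset of leaves genuinely contains no other terminals — this may require first passing to the minimal Steiner tree of the good leaves inside $T$ and observing that its internal branch vertices, being internal, were not leaves of the original $r$-decorated tree, so whether they are terminals or not is irrelevant once we note that only the \emph{selected} leaves carry the sign we tracked; if an internal vertex happens to be a terminal we simply cannot delete it, so instead of deletion the clean move is to argue directly on $T$ itself by a more careful sign bookkeeping, or to observe that we may assume the $r$-decorated tree was chosen with terminals only at leaves to begin with. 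I expect this bookkeeping — ensuring no uncontrolled cancellation from internal vertices — to be the main technical obstacle, and the cleanest route is likely to restrict attention to a caterpillar-like or path-like subtree through the good leaves where the bipartition is transparent.
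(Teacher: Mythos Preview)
Your overall strategy---pigeonhole to fix an index $i$ with many nonzero leaves, then isolate a subtree where the contributions do not cancel---matches the paper's. But the step you yourself flag as ``the main technical obstacle'' is a genuine gap, and your proposed fixes do not close it. The minimal Steiner tree on the good leaves can have internal terminals, including ones with $w_i \neq 0$ of the wrong sign, and nothing bounds their contribution; so your claimed equality $\mathrm{alt}(T'', w_i) = \sum_{v\ \text{good}} |w_i(v)|$ need not hold. The fallback of ``more careful sign bookkeeping on $T$ itself'' or passing to a caterpillar is not an argument.

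The paper's resolution avoids the issue entirely and is very short. After pigeonholing to fix $i$ (so at least $r/k$ leaves of $T$ have $w_i\neq 0$), partition the leaves of $T$ by the sign of their contribution to the signed sum $s(T):=\sum_{A_T}w_i-\sum_{B_T}w_i$ into $L^+$ and $L^-$. Set $T^+:=T\setminus L^-$ and $T^-:=T\setminus L^+$; both are trees because only leaves were removed. The key point is that $T^+$ and $T^-$ share every non-leaf vertex of $T$, so whatever the internal terminals contribute, they contribute identically to $s(T^+)$ and to $s(T^-)$. Hence
\[
s(T^+)-s(T^-)=\sum_{v\in L^+\cup L^-}|w_i(v)|\ \ge\ |L^+\cup L^-|\ \ge\ \tfrac{r}{k},
\]
and by the triangle inequality $\mathrm{alt}(T^+,w_i)+\mathrm{alt}(T^-,w_i)\ge |s(T^+)-s(T^-)|\ge r/k$, so one of the two is at least $r/(2k)$. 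No Steiner pruning and no control over internal vertices is needed: comparing the two trees makes the internal contributions cancel. This is exactly the ``careful sign bookkeeping on $T$ itself'' you were gesturing at, but the concrete mechanism---deleting one sign class of leaves at a time and comparing---is the missing idea.
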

\begin{proof}
    Let $T$ be an $r$-decorated tree in $G$ with respect to $R$.
    Let $L(T)$ denote the set of leaves of $T$, and define the set of leaves which are terminals as $L':=L(T)\cap R$.
    Note that $|L'|\ge r$.
    By the pigeon-hole principle, there is some $i\in[k]$ such that $w_i(v)\neq0$ for at least $\frac rk$ vertices in $L'$.

    Fix a bipartition $X_T,Y_T$ of $T$ in order to compute the alternating weight of $T$.
    We denote the set of leaves with a positive contribution to $\mathrm{alt}(T,w_i)$ by $L^+:=\{v\in L\cap X_T\mid w_i(v)>0\}\cup\{v\in L\cap Y_T\mid w_i(v)<0\}$, and the set of leaves with a negative contribution to $\mathrm{alt}(T,w_i)$ by $L^-:=\{v\in L\cap X_T\mid w_i(v)<0\}\cup\{v\in L\cap Y_T\mid w_i(v)>0\}$.
    Clearly, $T^+:=T\setminus L^-$ and $T^-:=T\setminus L^+$ are trees, and $|\mathrm{alt}(T^+,w_i)-\mathrm{alt}(T^-,w_i)|\ge\frac rk$.
    Thus, 
    \[
        \mathrm{alt}(S,w_i)\ge\max\{|\mathrm{alt}(T^+,w_i)|,|\mathrm{alt}(T^-,w_i)|\}\ge\frac{r}{2k}.
    \]
    \qed
\end{proof}

\begin{theorem}[{Ding~\cite[Lemma~3.1]{Din17}}]\label{thm_struct}
    Let $G=(V,E)$ be a graph with some set of terminals $R\subseteq V$, and $r\in\N$.
    There exists a function $f_{\ref{thm_struct}}(r)\in O(r^2)$ with the following properties.
    If $G$ has no $r$-decorated tree as a subgraph, then there is a star-decomposition $(T,\mathcal{B})$ of $G$ (with $t_0$ the center of $T$), such that
    \begin{itemize}
        \item $|B_{t_0}\cap R|\le f_{\ref{thm_struct}}(r)$,
        \item $|B_{t_0}\cap B_t|=2$ for all $t\in V(T)\setminus\{t_0\}$, and both vertices in $B_{t_0}\cap B_t$ are in $R$,
        \item $B_t\cap B_{t'}=\emptyset$ for all $t,t'\in V(T)$ leaves of $T$,
        \item for all $t\in V(T)\setminus\{t_0\}$ it holds that $G[B_t]$ is connected, and each $r\in (B_t\setminus B_{t_0})\cap R$ is a cut vertex in $G[B_t]$ separating the two vertices in $B_{t_0}\cap B_t$.
    \end{itemize}
\end{theorem}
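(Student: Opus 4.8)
I would prove this by extracting an extremal tree from $G$ and reading the decomposition off it; throughout I assume $G$ is connected, which is the case of interest.

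\emph{Step 1: the skeleton.} Since $G$ has no $r$-decorated tree, every subtree of $G$ all of whose leaves lie in $R$ has fewer than $r$ leaves. I pick such a subtree $T^*$ maximizing, lexicographically, first its number of leaves and then its number of vertices (if $R=\emptyset$, set $B_{t_0}=V$ and stop). The first claim is that $R\subseteq V(T^*)$, proved by an exchange argument: a terminal $v$ outside $T^*$ is joined to $V(T^*)$ by a shortest path $P$ meeting $T^*$ only in its last vertex $q$; if $q$ is internal in $T^*$, then $T^*\cup P$ is a tree all of whose leaves lie in $R$ and which has one more leaf than $T^*$, while if $q$ is a leaf of $T^*$ (hence $q\in R$), then $T^*\cup P$ has the same number of (terminal) leaves but strictly more vertices --- either way contradicting the choice of $T^*$. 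Let $C_0$ consist of the leaves and of the branch vertices (degree $\ge 3$) of $T^*$; since $T^*$ has at most $r-1$ leaves, $|C_0|\le 2r-4$, and $T^*-C_0$ is a disjoint union of at most $2r-5$ paths, the \emph{segments}, each joining two vertices of $C_0$. Call a terminal \emph{interior} if it is an internal (degree-two) vertex of some segment.

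\emph{Step 2: the rerouting lemma.} The core of the argument is: no ``attachment'' of $G$ to $T^*$ --- meaning either a component $M$ of $G-V(T^*)$ together with its neighbours, or a single chord of $T^*$ present in $G$ --- can have two attachment points $x,y\in V(T^*)$ whose $T^*$-path has an interior terminal $q$ in its interior. I would prove this by rerouting: pick such $x,y$ and a path $P'$ from $x$ to $y$ through $M$ (take $P'=xy$ for a chord); then $H:=T^*\cup P'$ is unicyclic, its cycle running along the $T^*$-path from $x$ to $y$ and back along $P'$, and $q$ has degree two in $H$ with both incident edges on that cycle. Deleting one of them yields a tree $H'$ in which $q$ is a leaf, with at least as many vertices as $T^*$; after pruning the at most one non-terminal leaf this creates, together with the non-terminal path trailing from it, I obtain a tree all of whose leaves lie in $R$ and which has more of them than $T^*$ --- a contradiction. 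This has to be verified in the configurations ``$x,y$ both interior to one segment'', ``one of them a branch vertex'', ``one of them a leaf'', and ``$x,y$ on distinct segments''. The consequence I extract is: every attachment lies inside a single ``gap'' of one segment --- between two consecutive interior terminals, or between an endpoint of the segment and its nearest interior terminal --- or inside the union of such \emph{end-stretches} around one branch vertex, possibly linked through entirely terminal-free segments; I call the latter a \emph{blob}, and it contains no terminal. In particular, leaving a segment from strictly between its first and last interior terminals forces passage through a terminal, so no attachment escapes the ``middle'' of a segment.

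\emph{Step 3: the decomposition.} Set $B_{t_0}:=C_0\cup\{\text{first and last interior terminal of each segment}\}\cup\{\text{all blobs}\}$, and for every segment with at least two interior terminals create one leaf bag, consisting of the middle of that segment together with all attachments living in it, plus the segment's two extreme interior terminals; these two terminals are exactly the vertices shared with $B_{t_0}$. I then check: distinct leaf bags are disjoint (interiors of distinct segments are disjoint and, by Step 2, no attachment in a segment's middle reaches another segment); each $G[B_t]$ is connected; the star-decomposition axioms hold because every edge of $G$ lies in $B_{t_0}$, in a leaf bag, or between a segment-middle and one of its two extreme terminals, hence again in that leaf bag; the interior terminals inside a middle are cutvertices of $G[B_t]$ separating its two extreme terminals, since by Step 2 nothing bypasses a terminal; and, since the blobs carry no terminals, $|B_{t_0}\cap R|\le |C_0\cap R|+2\cdot(\#\text{segments})\le 6r$, so one may take $f_{\ref{thm_struct}}$ linear, well within $O(r^2)$. (Components of $G-V(T^*)$ with only one neighbour in $B_{t_0}$ are absorbed into an adjacent leaf bag or into $B_{t_0}$; being terminal-free, they cost nothing.)

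\emph{The main obstacle.} The technical heart is Step 2. To make the rerouting rigorous one must track that $T^*\cup P'$ is genuinely unicyclic, so that deleting a single cycle edge leaves a tree; choose the deleted edge so that $q$ drops to degree one while no previous terminal leaf of $T^*$ is lost; and check that pruning the newly created non-terminal leaf, and the non-terminal path trailing it, neither deletes a terminal leaf nor disconnects the tree. All four attachment configurations must be handled separately. A secondary, easier but still real, difficulty is the assembly in Step 3: one must verify that the (possibly very large) blobs can all be swept into the centre bag without a single terminal leaking in --- which is exactly what keeps $|B_{t_0}\cap R|$ bounded --- and that every leaf-bag neighbourhood really has size exactly two.
\qed
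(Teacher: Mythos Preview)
The paper does not itself prove this theorem: it is quoted from Ding~\cite[Lemma~3.1]{Din17}, and the paper's only contribution toward the statement is \Cref{lemma_K1r}, which shows that a connected $K_{1,r}$-minor-free graph is a subdivision of a graph on $O(r)$ vertices; combined with Ding's bound $f_{\ref{thm_struct}}(r)\in O(s(r)^2)$ this gives the stated $O(r^2)$. So there is no in-paper argument to compare your sketch against, and your extremal-tree route is a genuinely different (and more direct) approach than what the paper invokes.

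However, your Step~2 is false as stated, and the failure propagates to Step~3. Take $G$ on the vertex set $\{c,x,x',z_1,p,v,q,z_2,y\}$, where $T^*$ has edges $cx,\,cx',\,cz_1,\,z_1p,\,pv,\,vq,\,qz_2,\,z_2y$ and $G=T^*+xv$; set $R=\{x,x',p,q,y\}$. One checks that no subtree of $G$ has more than three terminal leaves and that $T^*$ is lex-maximal (three terminal leaves, nine vertices). The chord $xv$ has the interior terminal $p$ strictly inside its $T^*$-path, yet rerouting cannot beat $T^*$: deleting the cycle edge $pv$ yields a tree with terminal leaves $\{x',p,y\}$ and the same nine vertices, while deleting $z_1p$ and pruning the non-terminal leaf $z_1$ yields terminal leaves $\{x',p,y\}$ on only eight vertices. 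Neither is a strict lexicographic improvement, so the configuration ``one of them a leaf'' is not merely tedious to verify --- it is false. This breaks Step~3: the unique segment with at least two interior terminals is the $c$--$y$ segment, its middle is $\{p,v,q\}$, and $x$ lies only in $B_{t_0}$; hence the edge $xv$ belongs to no bag and $(T,\mathcal{B})$ is not a tree-decomposition. (A different lex-maximal spanning tree of the same $G$, obtained by removing $cx$ instead of $xv$, happens to avoid the issue --- so your extremality criterion is too coarse to pin down a workable $T^*$.) A correct proof along these lines needs either a sharper choice of $T^*$, for instance an additional tiebreaker that penalizes chords from leaves of $T^*$ into segment middles, or a rerouting argument that genuinely handles attachments with an endpoint at a leaf of $T^*$.
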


Note that this implies that the maximum degree of $T$ is bounded from above by $f_{\ref{thm_struct}}(r)/2$.
We further remark that all steps in the proof of~\cite[Lemma~3.1]{Din17} are constructive, so one can find the above decomposition in polynomial time.

In the proof of \cite[Lemma~3.1]{Din17}, the concrete order of $f_{\ref{thm_struct}}(r)$ is shown to be in $O(s(r)^2)$, where $s(r)$ denotes the minimal number such that every connected graph excluding a $K_{1,r}$-minor is a subdivision of a graph on $s(r)$ vertices.

%

\begin{lemma}\label{lemma_K1r}
    Let $r\in\N$, and $G$ be a connected graph without $K_{1,r}$-minor.
    There exists a function $s_{\ref{lemma_K1r}}(r)\in O(r)$, such that $G$ is a subdivision of a graph on at most $s_{\ref{lemma_K1r}}(r)$ vertices.
\end{lemma}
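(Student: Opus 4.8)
The plan is to combine an elementary neighbourhood characterisation of $K_{1,r}$-minor-freeness with the classical Kleitman--West theorem, that a connected graph of minimum degree at least $3$ has a spanning tree with linearly many leaves. Throughout I use the following characterisation: a connected graph $G$ has a $K_{1,r}$-minor if and only if some connected subgraph $H\subseteq G$ satisfies $|N_G(V(H))|\ge r$. Indeed, the branch set of the centre of a $K_{1,r}$-model is connected and picks up one distinct neighbour from each leaf branch set, and conversely $V(H)$ together with $r$ singletons chosen inside $N_G(V(H))$ is a $K_{1,r}$-model. The decisive consequence, obtained by taking $H:=S-L(S)$ where $S$ is a spanning tree of any connected subgraph of $G$ on at least three vertices and $L(S)$ is the set of leaves of $S$: then $H$ is connected and nonempty, every leaf of $S$ has its $S$-neighbour (hence a $G$-neighbour) in $H$, so $S$ has at most $r-1$ leaves. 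This bound on leaves of spanning trees is the only place where the hypothesis on $G$ is used.

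I then reduce to the minimum-degree-$3$ situation. Let $G_0$ be obtained from $G$ by suppressing every vertex of degree $2$ (the case that $G$ is a cycle being trivial); then $G$ is a subdivision of $G_0$, so it suffices to bound $|V(G_0)|$, and $G_0$ has no vertex of degree $2$. Since $G_0$ is a minor of $G$ it has no $K_{1,r}$-minor, so by the previous paragraph every spanning tree of $G_0$ has at most $r-1$ leaves and in particular $G_0$ has at most $r-1$ vertices of degree at most $1$. If $G_0$ is a tree, a degree count gives $|V(G_0)|\le 2(r-1)$; otherwise let $C$ be its $2$-core. The part of $G_0$ outside $C$ is a forest of pendant trees, each joined to $C$ by a single edge and each containing a vertex of $G_0$ of degree at most $1$, so there are at most $r-1$ of them and, by a degree count inside each, they contribute only $O(r)$ vertices. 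Moreover every degree-$2$ vertex of $C$ must carry one of these pendant trees (otherwise it would remain a degree-$2$ vertex of $G_0$), so $C$ has at most $r-1$ vertices of degree $2$, and it remains only to bound the number $b$ of vertices of $C$ of degree at least $3$.

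Finally, suppress the degree-$2$ vertices of $C$ to obtain a connected graph $\bar C$ on $b$ vertices with minimum degree at least $3$ and no $K_{1,r}$-minor. By the Kleitman--West theorem, $\bar C$ has a spanning tree with at least $b/4+2$ leaves, and the leaf bound from the first step forces $b/4+2\le r-1$, i.e.\ $b\le 4(r-3)$. Summing the contributions, $|V(G_0)|=O(r)$, so $G$ is a subdivision of a graph on $O(r)$ vertices and the lemma holds with $s_{\ref{lemma_K1r}}(r)=O(r)$.

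The step I expect to be the most delicate is the reduction to minimum degree $3$, specifically that the suppressed graphs $\bar C$ and $G_0$ are a priori multigraphs: a multi-edge between $u$ and $v$ corresponds to several parallel bare paths of $G$ between $u$ and $v$, and Kleitman--West is false for multigraphs. One has to control such bundles --- the characterisation already bounds the degree of $u$, hence the number of these parallel paths, by $r-1$, and taking one of them together with its ends as the connected subgraph $H$ exposes the internal vertices of the others as distinct neighbours, which lets one absorb these configurations into the argument (so that, after treating the parallel bare paths separately, the many-leaves statement is only applied to a genuinely simple minimum-degree-$3$ graph). Once this is in place, the pendant-tree and $2$-core degree counts are routine.
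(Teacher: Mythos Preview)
Your route is genuinely different from the paper's. The paper never invokes Kleitman--West; instead it argues directly in $G$: take a maximal set $N_1$ of degree-$\geq 3$ vertices at pairwise distance $\geq 3$, show via a star-packing count that $|N_1|$ and $\sum_{v\in N_1}\deg(v)$ are $O(r)$, and then bound the first and second neighbourhoods $N_2,N_3$ of $N_1$ by $O(r)$ using pigeonhole. Maximality forces every degree-$\geq 3$ vertex into $N_1\cup N_2\cup N_3$, and the degree-$1$ vertices are bounded as you do. This sidesteps the multigraph issue completely.

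Your argument is correct up to, and including, the reduction to $\bar C$; the gap is exactly where you flag it, but your sketch does not close it. Bounding the \emph{multiplicity} of each bundle of parallel bare paths by $r-1$ is fine, but what you need for Kleitman--West is a simple graph of minimum degree $3$, and simplifying $\bar C$ can destroy minimum degree $3$ at \emph{many} vertices simultaneously. Concretely, take $G$ to be a chain of theta graphs: vertices $w_0,\dots,w_n$ with three internally disjoint length-$2$ paths between each consecutive pair. Then $\bar C$ is the multigraph on $w_0,\dots,w_n$ with a triple edge between consecutive vertices, so the underlying simple graph $\bar C_s$ is a path and Kleitman--West yields nothing. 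Of course this $G$ does have a $K_{1,2n}$-minor (take $H$ a $w_0$--$w_n$ path), so $n<r/2$ and the conclusion holds, but your argument as written does not see this: you would need to bound the \emph{number} of multi-edge pairs (equivalently, the number of vertices of $\bar C_s$ of degree $\leq 2$), and your neighbourhood bound on a single bundle does not give that. Filling this in requires an extra counting step of roughly the same flavour as the paper's packing argument, so ``absorbing these configurations'' is not as routine as the final paragraph suggests.
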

\begin{proof}
    Fix $r\in\N$, and let $G$ be a connected graph without $K_{1,r}$-minor.
    Then $G$ does not contain a tree with more than $r$ leaves as a subgraph.

    We first claim that given a tree $T$, the number of leaves is given by 
    \[
        \#(\text{leaves of }T)=2+\sum_{v\in V(T),\,\mathrm{deg}_T(v)\ge3}(\mathrm{deg}(v)-2).
    \]
    From the handshake lemma and the fact that in a tree, $|E(T)|=|V(T)|-1$, we know that $2+\sum_{v\in V(T)}(\mathrm{deg}(v)-2)=0$.
    By splitting the sum into vertices of degree less than $2$ and more than $2$, we obtain that 
    \[
        2+\sum_{v\in V(T),\,\mathrm{deg}_T(v)\ge3}(\mathrm{deg}(v)-2)=\sum_{v\in V(T),\,\mathrm{deg}_T(v)=1}(\mathrm{deg}(v)),
    \]
    which implies the claim.

    Let $p\in\N$, and consider a subgraph of $G$ that is a vertex-disjoint packing of $K_{1,r_i}$ with $r_i\ge3$ for all $i\in[p]$.
    We claim that 
    \begin{equation}\sum_{i\in[p]}r_i\le3r,\label{num_vertices}\end{equation} 
    so in particular $p\le r$.
    To see this, assume to the contrary that there is a packing of $K_{1,r_i}$ such that $r_i\ge3$ for all $i\in[p]$ and $\sum_{i\in[p]}r_i>3r$.
    Extend the packing to a spanning tree $T$ in $G$.
    By the formula for the number of leaves of a tree, we have 
    \begin{equation*}
        \#(\text{leaves of }T) = 2+\sum_{v\in V(T),\,\mathrm{deg}_T(v)\ge3}(\mathrm{deg}(v)-2)\ge 2+\sum_{i\in[p]}(r_i-2)>r,
    \end{equation*}
    a contradiction.
    The last inequality holds, since each $r_i\ge3$, and $\sum_{i\in[p]}r_i>3r$.

    Choose an inclusion-wise maximal set of vertices $N_1 = \{v_1,\dots,v_p\}$ of degree at least $3$ in $G$, such that $d(v_i,v_j)\ge 3$ for $i,j\in[p]$ with $i\neq j$.
    By~\eqref{num_vertices}, $|N_1|\le r$, and $\sum_{i\in[p]}\mathrm{deg}(v_i)\le 3r$.
    We define $N_2:=N(N_1)$, and observe that $|N_2|\le 3r$.
    We further define $N_3:=N(N_2)\setminus N_1$.
    Assume that $|N_3|>9r$.
    We construct a subset $N^*\subseteq N_2$ greedily, by iteratively choosing the vertex with the largest number of neighbors in $N_3$, adding it to $N^*$ and removing these neighbors from $N_3$.
    We stop the process when $N(N^*)\cap N_3>3r$.
    By the pigeonhole principle, each vertex $v\in N^*$ has at least $3$ unique neighbors in $N_3$ ( similarly, the first one has at least $4$ neighbors).
    Thus, $|N^*|\le r$, a contradiction to~\eqref{num_vertices}.
    Since $N_1$ is maximal, all vertices of degree at least $3$ must be contained in $N_1\cup N_2\cup N_3$.
    Further, there is at most $r$ vertices of degree $1$ in $G$ (otherwise every spanning tree has more than $r$ leaves).

    Thus, the number of vertices in $G$ with degree not equal to $2$ is in $O(r)$, which concludes the proof.
    \qed
\end{proof}

We proceed to modify the decomposition from \Cref{thm_struct} in order to obtain a graph decomposition that allows us to employ a dynamic programming framework in order to solve Problem~\ref{prob_reduced}.

\begin{lemma}\label{lem_helper}
    Let $G$ be a connected graph with set of terminals $R = \{v_0,\dots,v_\alpha\}$ for some $\alpha\in\N$.
    Assume that for all $i\in[\alpha-1]$, $v_i$ is a cut vertex of $G$ separating $v_0$ from $v_\alpha$.
    
    Then there is an ordering of $R$, such that every $v_0-v_\alpha$-path in $G$ visits all terminals in sequence with respect to the ordering.
    Further, any component of $G-R$ attaches to at most two terminals.
    These terminals must be consecutive with respect to the ordering.
\end{lemma}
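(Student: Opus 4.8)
The plan is to read off the ordering from a single path and then check that every path respects it. For $i\in[\alpha-1]$, since $v_i$ separates $v_0$ from $v_\alpha$, the graph $G-v_i$ has one component $G_i^0$ containing $v_0$ and a different component $G_i^\alpha$ containing $v_\alpha$. As $G$ is connected there is a path $P_0$ from $v_0$ to $v_\alpha$, and every $v_0$--$v_\alpha$ path must pass through each cutvertex $v_i$ (as $v_i$ separates $v_0$ from $v_\alpha$), so $P_0$ visits all of $R$, say in the order $v_0=w_0,w_1,\dots,w_{\alpha-1},w_\alpha=v_\alpha$. This will be the claimed ordering of $R$.

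The crux is that this order does not depend on $P_0$. The key sub-claim I would establish is: for distinct $i,j\in[\alpha-1]$ and any $v_0$--$v_\alpha$ path $P$, the path $P$ visits $v_i$ before $v_j$ if and only if $v_j\in G_i^\alpha$. Indeed, if $P$ reaches $v_i$ first then its sub-path from $v_j$ to $v_\alpha$ avoids $v_i$, so $v_j$ lies in $G_i^\alpha$; if $P$ reaches $v_j$ first then its sub-path from $v_0$ to $v_j$ avoids $v_i$, so $v_j\in G_i^0\neq G_i^\alpha$. Since the right-hand side is independent of $P$, the relative order of every pair of terminals is the same along all $v_0$--$v_\alpha$ paths; as each such path starts at $v_0$ and ends at $v_\alpha$, it visits $R$ exactly in the order $w_0,\dots,w_\alpha$, which proves the first assertion. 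The same sub-path argument also records, for $s<t$, that $w_s\in G_{w_t}^0$ whenever $t\in[\alpha-1]$ and $w_t\in G_{w_s}^\alpha$ whenever $s\in[\alpha-1]$ (the cases $s=0$ and $t=\alpha$ being immediate from the definitions).

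For the second assertion, let $C$ be a component of $G-R$ adjacent to terminals $w_s$ and $w_t$ with $s<t$, and suppose for contradiction that some index $u$ satisfies $s<u<t$. Then $u\in[\alpha-1]$, so $w_u$ is a cutvertex; since $C$ is connected and disjoint from $R\ni w_u$, it lies inside a single component of $G-w_u$, and because $w_s,w_t$ are distinct from $w_u$ and each has a neighbour in $C$, they lie in that same component. But the previous paragraph gives $w_s\in G_{w_u}^0$ and $w_t\in G_{w_u}^\alpha$, which are different components of $G-w_u$ --- a contradiction. Hence any two distinct terminals adjacent to $C$ have consecutive indices; in particular $C$ has at most two neighbours in $R$, and if exactly two, they are consecutive in the ordering. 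I expect the well-definedness step in the second paragraph --- essentially the statement that the cuts induced by $v_1,\dots,v_{\alpha-1}$ are laminarly nested between the two ends $v_0$ and $v_\alpha$ --- to be the only real obstacle; once it is in place, the component statement follows from the same principle that a connected subgraph avoiding a cutvertex stays entirely on one side of it.
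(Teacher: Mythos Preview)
Your proof is correct and follows essentially the same approach as the paper: both hinge on the observation that each cutvertex $v_i$ splits $G$ into a ``$v_0$-side'' and a ``$v_\alpha$-side'', and that a simple $v_0$--$v_\alpha$ path, meeting $v_i$ exactly once, must traverse the $v_0$-side before the $v_\alpha$-side. The paper phrases this as ``the position of $v_i$ is fixed'' while you phrase it as ``the relative order of each pair is fixed''; for the component statement the paper constructs a $v_0$--$v_\alpha$ path through $C$ that skips the intermediate terminal, whereas you argue directly that $w_s$ and $w_t$ would land on opposite sides of the cut at $w_u$ despite being connected through $C$ --- these are dual formulations of the same contradiction. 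Your write-up is slightly more explicit in setting up $G_i^0,G_i^\alpha$ and in isolating the pairwise sub-claim, which makes the well-definedness step cleaner, but there is no substantive difference in method.
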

\begin{proof}
    We first claim that any $v_0-v_\alpha$-path contains each terminal in $R$.
    Assume otherwise, then there is some $i\in[\alpha-1]$, such that $v_i$ does not separate $v_0$ from $v_\alpha$, a contradiction.

    If there are two paths with a different ordering of terminals, then there is some $v_i\in R$, such that its position is different in the orderings.
    Consider the connected components of $G-v_i$, and denote the components containing $v_0$ and $v_\alpha$ by $C_0$ and $C_\alpha$ respectively.
    By the first claim, every $v\in R-v_i$ is in $C_0$ or $C_\alpha$.
    Since $v_i$ is a cut vertex, every $v_0-v_\alpha$-path first traverses $C_0$ and then $C_\alpha$, a contradiction to the assumption that the position of $v_i$ is different.

    Therefore, we can assume that every $v_0-v_\alpha$-path visits all terminals in the same ordering.
    Now assume for a contradiction that there is a component $C$ of $G-R$ that attaches to $v_i$ and $v_j$ for $i<j\in\{0,\dots,\alpha\}$ where $i+1\neq j$.
    Then, we can find a $v_0-v_\alpha$-path avoiding $v_{i+1}$ by combining a $v_0-v_i$-path, a $v_i-v_j$-path in $C$, and a $v_j-v_\alpha$-path.
    This contradicts the first claim.
    \qed
\end{proof}

\begin{figure}[h]
    \centering
    \includegraphics[width=0.65\textwidth]{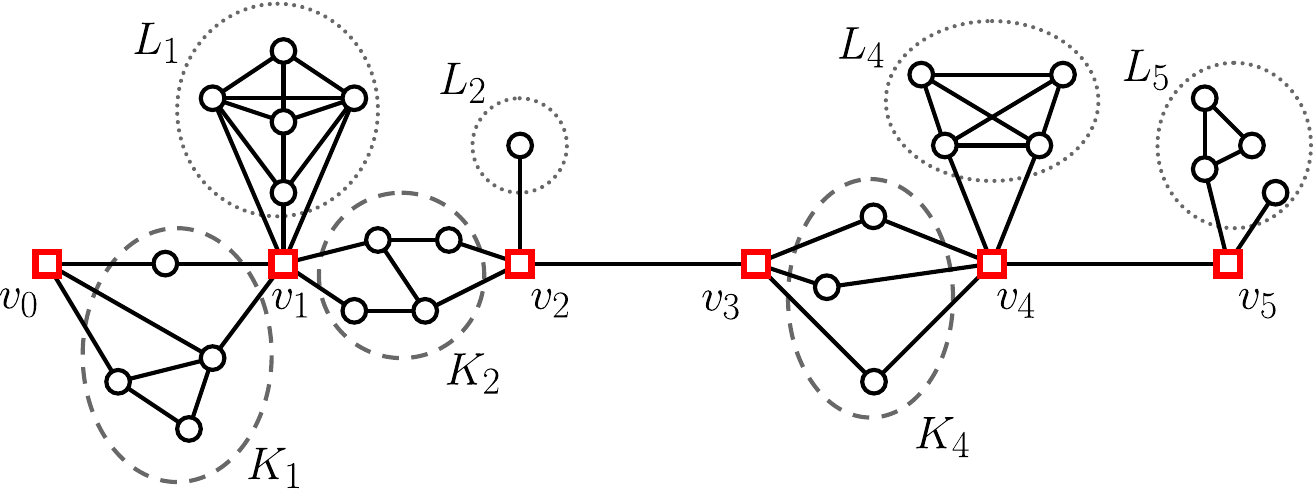}
    \caption{A depiction of the structure of graphs in \Cref{lem_helper}. The red vertices are terminals. The collections of components $K_i$ (dashed) and $L_j$ (dotted) are used in the proof of \Cref{thm_refinedstruct}.}
    \label{fig_path}
\end{figure}

\begin{theorem}\label{thm_refinedstruct}
    Let $r\in\N$ and $G$ be a graph with set of terminals $R$.
    There is a function $f_{\ref{thm_refinedstruct}}(r)\in O(r^2)$ with the following properties.
    If $G$ has no $r$-decorated tree as a subgraph, then there is a tree-decomposition $(T,\mathcal{B})$ of $G$ with adhesion-size and degree bounded by $f_{\ref{thm_refinedstruct}}(r)$, such that each bag $B_t$ for $t\in V(T)$ is of at least one of the following two types:
    \begin{enumerate}
        \item\label{type1} $|B_t|\le f_{\ref{thm_refinedstruct}}(r)$
        \item\label{type2} $t$ is a leaf of $T$ with neighbor $t'$, and $B_t\cap R\subseteq B_{t'}$
    \end{enumerate}
    Further, we can compute such a decomposition in polynomial time.
\end{theorem}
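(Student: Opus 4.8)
The plan is to start from the star-decomposition $(T^\star,\mathcal{B}^\star)$ provided by \Cref{thm_struct} with center $t_0$, and to refine each leaf bag into a path-like piece using the structural information in \Cref{lem_helper}. Fix the parameter $f_{\ref{thm_refinedstruct}}(r) := \max\{f_{\ref{thm_struct}}(r), \text{(a constant)}\}$, to be pinned down during the argument; I expect it stays in $O(r^2)$ since $f_{\ref{thm_struct}}(r)\in O(r^2)$ dominates the other bounds. The center bag $B_{t_0}$ already satisfies $|B_{t_0}\cap R|\le f_{\ref{thm_struct}}(r)$, but $B_{t_0}$ itself may be large because of non-terminal vertices; so the first substep is to handle $B_{t_0}$. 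Components of $G[B_{t_0}]-R$ that attach only to vertices inside $B_{t_0}$ can be absorbed into new leaf bags of Type~\ref{type2}: create a bag consisting of such a component together with its (at most $f_{\ref{thm_struct}}(r)$, in fact $\le |B_{t_0}\cap R|$) neighbors in $B_{t_0}\cap R$, hang it off $t_0$, and delete the component's vertices from $B_{t_0}$. After this, $B_{t_0}$ consists only of terminals and vertices on short separators, so $|B_{t_0}|$ is $O(r^2)$, giving a Type~\ref{type1} center.

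Next I would process each old leaf bag $B_t$, $t\neq t_0$. By \Cref{thm_struct}, $G[B_t]$ is connected, it shares exactly two terminals $\{a_t,b_t\}$ with $B_{t_0}$, and every terminal of $B_t$ outside $B_{t_0}$ is a cutvertex of $G[B_t]$ separating $a_t$ from $b_t$. That is exactly the hypothesis of \Cref{lem_helper} applied to $G[B_t]$ with terminal set $(B_t\cap R)$ and endpoints $a_t=v_0$, $b_t=v_\alpha$. \Cref{lem_helper} yields a linear order $v_0=a_t, v_1,\dots,v_\alpha=b_t$ on $B_t\cap R$ such that every component of $G[B_t]-R$ attaches to at most two consecutive $v_i,v_{i+1}$. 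Now build a path $Q_t$ of bags $t_0, s_1, s_2, \dots, s_\alpha$ hanging off $t_0$: set $B_{s_i} := \{v_{i-1},v_i\}\cup\{$components of $G[B_t]-R$ attaching to $\{v_{i-1},v_i\}\}$. The adhesion between consecutive bags is $\{v_i\}$ (size $1$), the adhesion between $t_0$ and $s_1$ is the separator $B_{t_0}\cap B_t$ together with $\{a_t\}\subseteq B_{t_0}$ — but by \Cref{thm_struct} this is just the $2$-element set $\{a_t,b_t\}$… wait, $b_t$ need not be in $B_{s_1}$. To fix the adhesion/connectivity bookkeeping I would instead attach the whole path $Q_t$ at a fresh intermediate bag $B_{t}':=\{a_t,b_t\}$ adjacent to $t_0$ and to $s_1$ and to $s_\alpha$; since $\{a_t,b_t\}\subseteq B_{t_0}$, this small bag is Type~\ref{type1}, its adhesion with $t_0$ is $\{a_t,b_t\}$, and the connectivity condition for $a_t,b_t$ is preserved along $Q_t$. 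Each $B_{s_i}$ need not be small (a component of $G[B_t]-R$ can be huge), but $B_{s_i}\cap R=\{v_{i-1},v_i\}$, and $\{v_{i-1},v_i\}\subseteq B_{t}'$ only for $i=1$ and $i=\alpha$; for the interior bags we make them Type~\ref{type2} by rooting the path so that $s_i$'s parent is $s_{i-1}$ and noting $B_{s_i}\cap R=\{v_{i-1},v_i\}$ with $v_{i-1}\in B_{s_{i-1}}$ — so only $v_i\notin$ parent, which is not quite ``$\subseteq B_{t'}$''. The clean fix: split $B_{s_i}$ into the separator pair plus one leaf bag per attached $(v_{i-1},v_i)$-component, exactly as we did at the center; then every component-bag is a genuine leaf with its $\le 2$ terminals contained in its neighbor. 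The remaining ``spine'' bags $\{v_{i-1},v_i\}$ form a path, have size $2$, and are Type~\ref{type1}.

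The verification steps are then routine: (1) the three tree-decomposition axioms — coverage is clear since we only reshuffled vertices; the edge condition holds because every edge of $G$ lies inside some $G[B_t]$ (old leaf) or inside $B_{t_0}$, and our refinement keeps each component together with all its attachment terminals in a common bag; connectivity of $\{t : v\in B_t\}$ holds for terminals because a terminal appears only in its spine position(s) and the adjacent center/leaf bags, which form a connected subtree by construction, and for non-terminals because each lies in a unique component-bag. (2) Adhesion size: all new adhesions are $\{v_i\}$, $\{v_{i-1},v_i\}$, $\{a_t,b_t\}$, or the original center separators of size $\le 2$ — all $\le 2\le f_{\ref{thm_refinedstruct}}(r)$. (3) Degree: the center $t_0$ had degree $\le f_{\ref{thm_struct}}(r)/2$ in $T^\star$, and we add $O(r^2)$ component-bags plus one $B_t'$ per old leaf, so $\deg(t_0)\in O(r^2)$; spine vertices get degree $O(1)+$(number of attached components), which we can keep bounded by giving each component its own pendant bag rather than collecting them, making every non-center degree $O(1)$ — still $\le f_{\ref{thm_refinedstruct}}(r)$ after choosing the constant large enough. (4) Polynomial time: \Cref{thm_struct} is constructive, \Cref{lem_helper}'s order is found by computing cut vertices and a single $v_0$–$v_\alpha$ path, and the component decompositions are connected-component computations.

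I expect the main obstacle to be the bookkeeping that makes \emph{every} bag fall into Type~\ref{type1} or Type~\ref{type2} simultaneously with the degree bound: the naive path construction produces ``spine'' bags that are small (good, Type~\ref{type1}) but also potentially-large ``component'' bags whose terminal set is a $2$-element separator not contained in a single neighbor. Resolving this forces the design decision above — peel every $G-R$ component off into its own pendant leaf bag (Type~\ref{type2}) whose at most two terminals lie in the adjacent small spine bag — and one must then re-check that this does not blow up the degree of the spine vertices, which is why I would attach at most one component per pendant bag and argue the number of such components incident to a fixed consecutive pair is itself $O(1)$ or, if not, simply allow the spine vertex's degree to be the number of its incident components and bound that by the total, still $O(r^2)$, absorbing it into $f_{\ref{thm_refinedstruct}}(r)$. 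The rest is a careful but unsurprising check of the tree-decomposition axioms.
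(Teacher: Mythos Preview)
Your overall plan---start from the star-decomposition of \Cref{thm_struct}, push the big center bag into a leaf, and refine each old leaf bag into a path of small ``spine'' bags with pendant ``component'' bags using \Cref{lem_helper}---is exactly the paper's approach. However, two of your execution steps fail as written.

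\textbf{The degree bounds are not justified and are false.} You assert that the number of components of $G[B_{t_0}]\setminus R$ hung off the center is $O(r^2)$, and later that the total number of components of $G[B_t]\setminus R$ along a leaf path is $O(r^2)$. Neither holds: a single terminal can have arbitrarily many non-terminal neighbours, each forming its own component, without creating any $r$-decorated tree. Hence ``one pendant bag per component'' can give unbounded degree. The remedy (and what the paper does) is to \emph{collect} rather than split: put the entire old center bag $B'_{w_0}$ into \emph{one} pendant leaf $B_{t_{-1}}$ (its terminals are exactly $B'_{w_0}\cap R=B_{t_0}$, so it is Type~\ref{type2} with adhesion $\le f_{\ref{thm_struct}}(r)$), and for each spine position $i$ put \emph{all} components $K_i$ (attaching to $v_{i-1},v_i$) and $L_i$ (attaching only to $v_i$) into a \emph{single} pendant bag $B_{d_i}$. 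Then every spine vertex has degree at most $3$ and the center has degree at most $f_{\ref{thm_struct}}(r)/2+1$.

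\textbf{The $B_t'$ gadget creates a cycle.} You make $B_t'=\{a_t,b_t\}$ adjacent to $t_0$, to $s_1$, and to $s_\alpha$, while also retaining the spine path $s_1\!-\!s_2\!-\!\cdots\!-\!s_\alpha$; that is a cycle $B_t',s_1,\dots,s_\alpha,B_t'$, not a tree. Dropping the spine edges instead breaks the connectivity axiom for the interior terminals $v_1,\dots,v_{\alpha-1}$. The paper's fix is cleaner: carry $v_{\lambda_w}=b_t$ through \emph{every} spine bag, setting $B_{c_i}:=\{v_{i-1},v_i,v_{\lambda_w}\}$. Then $b_t$'s bags form the whole spine (connected), $a_t$'s bags are $\{t_0,c_1\}$, and each interior $v_i$ lives only in $\{c_i,c_{i+1},d_i,d_{i+1}\}$; no auxiliary node is needed, adhesions stay at size $\le 3$, and every spine bag is Type~\ref{type1} with $|B_{c_i}|\le 3$.

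With these two corrections your sketch becomes essentially the paper's proof; the remaining verification of the tree-decomposition axioms and the polynomial-time claim go through as you outlined.
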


\begin{proof}
    Let $f_{\ref{thm_refinedstruct}}:=\max\{3,f_{\ref{thm_struct}}(r)/2+2,f_{\ref{thm_struct}}(r)\}$.
    We first assume that $G$ is a connected graph.
    Then, we can apply \Cref{thm_struct} and obtain a decomposition $(T',\mathcal{B}')$ of $G$.
    We proceed to describe how to modify such a decomposition in order to obtain a decomposition $(T,\mathcal{B})$ that fulfills the conditions of the lemma.
    
    For $k\ge1$, we define a {\em $k$-comb} as a tree on $2k$ vertices $c_1,\dots,c_k,d_1,\dots,d_k$, consisting of a path $c_1,\dots,c_k$, and edges $c_id_i$ for $i\in[k]$.
    We first describe how we modify $T'$ in order to obtain $T$.
    Let $w_0$ be the center vertex of $T'$.
    For any leaf $w$ of $T'$, we define $\lambda_w:=|B'_{w}\cap R|-1$.
    We replace each $w$ by a $\lambda_w$-comb on the vertex set $\{c^w_1,\dots,c^w_{\lambda_w},d^w_1,\dots,d^w_{\lambda_w}\}$, such that there is an edge $w_0c^w_1$.
    Finally, we attach a vertex $t_{-1}$ to the center vertex $w_0$, and denote the resulting tree by $T$ with center vertex $t_0$.

    We set $B_{t_{-1}}:=B'_{w_0}$ and $B_{t_0}:=B'_{w_0}\cap R$.
    Observe that $B_{t_0}$ has type~\ref{type1}, $B_{t_{-1}}$ has type~\ref{type2}, and their intersection is bounded by $f_{\ref{thm_struct}}(r)$.

    For any leaf $w$ of $T'$, we denote the two terminals in $B'_w\cap B'_{w_0}$ by $v_0(w)$ and $v_{\lambda_w}(w)$.
    By \Cref{lem_helper} applied to $B'_w$, we obtain an ordering $v_0(w),v_1(w),\dots,v_{\lambda_w}(w)$ of the terminals, such that any $v_0(w)-v_{\lambda_w}(w)$-path visits the terminals in this ordering.    
    We set $B_{c^w_i}:=\{v_{i-1}(w),v_i(w),v_{\lambda_w}(w)\}$ for $i\in[\lambda_w]$.
    Observe that each such bag has at most $3$ elements and is of type~\ref{type1}.

    It remains to describe the bags $B_{d^w_i}$ for $i\in[\lambda_w]$.
    By \Cref{lem_helper}, each component of $G[B'_w]\setminus R$ attaches to at most two consecutive terminals in $v_0(w),v_1(w),\dots,v_{\lambda_w}(w)$.
    For $i\in[\lambda_w]$, we denote by $K_i$ the union of all components that attach to $v_{i-1}(w)$ and $v_i(w)$.
    For $j\in\{0,\dots,\lambda_w\}$, we denote by $L_j$ the union of all components that attach only to $v_j(w)$.
    We define $B_{d^w_1}:=\{v_0(w),v_1(w),v_{\lambda_w}(w)\}\cup K_1\cup L_0\cup L_1$ and $B_{d^w_i}:=\{v_{i-1}(w),v_i(w),v_{\lambda_w}(w)\}\cup K_i\cup L_i$ for $i\in\{2,\dots,\lambda_w\}$.
    Observe that all $d^w_i$ are leaves of $T$, and any terminal that they contain is also present in $c^w_i$ for all $i\in[\lambda_w]$.
    Therefore, each such bag is of type~\ref{type2} and the decomposition $(T,\mathcal{B})$ is of the desired form.

    If $G$ is not connected, we consider the connected components $G_1,\dots,G_{\gamma}$ of $G$ for some $\gamma\in\N$.
    By the previous arguments, we obtain a tree-decomposition $(T_i,\mathcal{B}_i)$ for each $i\in[\gamma]$.
    We combine the tree-decompositions into one central decomposition in the following way.

    We construct the tree $T$ as a $\gamma$-comb on the vertex set $\{c_1,\dots,c_\gamma,d_1,\dots,d_\gamma\}$, where $d_i$ is replaced by $T_i$, such that $c_i$ attaches to the central vertex of $T_i$.
    This increases the maximum degree of the tree-decomposition by at most one.
    The collection of bags $\mathcal{B}$ is obtained by copying the $\mathcal{B}_i$ for $i\in[\gamma]$.
    Note that in this way, the bags corresponding to $c_i$ for $i\in[\gamma]$ remain empty.
    \qed
\end{proof}

\section{Dynamic Program}\label{sec:dp}
In this section, we describe how to exploit the structural results from \Cref{sec:subdet,sec:struct} into an efficient algorithm for Problem~\ref{prob_reduced}.
We propose a \emph{bottom-up} dynamic program on the tree-decomposition given by \Cref{thm_refinedstruct}, which follows the ideas of the dynamic program presented in~\cite{AFJ24}.
We define a set of \emph{rooted instances} corresponding to subtrees of the tree-decomposition.
In each of these subtrees, we compute optimal solutions for the possible assignments on the right hand sides of the extra rows as well as the possible assignments on variables in the intersection with the parent in the tree-decomposition.

In bags of type~\ref{type2}, we fix all assignments on the terminals of the subgraph, and solve the remaining integer program with at most two non-zeros per row.
Bags of type~\ref{type1} have bounded size. 
Hence, we can enumerate all possible assignments to the corresponding variables.
Since the degree of the tree-decomposition is bounded, we can also combine the optimal solutions in the corresponding subtrees efficiently.
Note that by \Cref{thm_cookproximity} and \Cref{thm_refinedstruct}, the number of possible assignments is polynomial in $n$ and $\Delta$.

In the context of the dynamic program, we work with \emph{rooted tree-decompositions}.
A rooted tree-decomposition of a graph $G$ is a tree-decomposition $(T,\mathcal{B})$ with the additional property that some vertex $t_0\in V(T)$ is identified as the root of $T$.
In a rooted tree-decomposition $(T,\mathcal{B})$ with root $t_0$, any vertex $t$ of $T$ induces a \emph{rooted subtree}, i.e., the tree induced on the vertices $t'\in V(T)$, such that the unique $t'-t_0$-path in $T$ contains $t$.
We denote this rooted subtree by $T(t)$, and its associated collection of vertices of $G$ by $B(t):=\{v\in V(G)\mid\exists t'\in V(T(t)):v\in B_{t'}\}$.
Further, for any vertex of a rooted tree-decomposition $t\in V(T)$, we define its \emph{adhesion} as $\mathrm{adh}(t):=\emptyset$ if $t=t_0$, and $\mathrm{adh}(t):=B_t\cap B_{t'}$ for $t'$ being the unique parent of $t$ in the rooted tree-decomposition otherwise.

We begin by recalling the setup for Problem~\ref{prob_reduced}.
Let $\Delta,k\in\N$, $A\in\{0,\pm1\}^{m\times n}$, $W\in\Z^{k\times n}$, such that $A$ has two non-zeros per row.
Let $M:=\smallmat{A\\W}$ be totally $\Delta$-modular, $c,\ell,u\in\Z^n$, $b\in\Z^{m}$, and $d\in\Z^k$, which together define an instance of Problem~\ref{prob_reduced}.
Let $S=(V,E,\sigma)$ be a signed graph with corresponding incidence matrix $A$, and denote $G:=(V,E)$ the unsigned graph with set of terminals $R\subseteq V$, where $v\in V$ is a terminal if the column vector of $W$ corresponding to $v$ is non-zero.
By \Cref{prop_alt} and \Cref{lemma_decorated}, we conclude that $G$ has no $(2k\Delta+1)$-decorated tree as a subgraph.

Let $(T,\mathcal{B})$ be the tree-decomposition of $G$ given by \Cref{thm_refinedstruct} with respect to the set of terminals $R$.
We turn $(T,\mathcal{B})$ into a rooted tree-decomposition by identifying the root as $c_1$ (in the notation of the proof of \Cref{thm_refinedstruct}), i.e., the first vertex on the path of the comb that combines the decompositions of the components.

Further, we define assignment spaces for the right hand side and the variables respectively: $\mathcal{D}:=[-n\Delta f_{\ref{thm_cookproximity}}(n,\Delta),n\Delta f_{\ref{thm_cookproximity}}(n,\Delta)]^k\cap\Z^k$, and $\mathcal{X}_t:=[-f_{\ref{thm_cookproximity}}(n,\Delta),f_{\ref{thm_cookproximity}}(n,\Delta)]^{\mathrm{adh}(t)}\cap\Z^{\mathrm{adh}(t)}$.

\begin{definition}
    Given a vertex of the tree-decomposition $t\in V(T)$, an assignment on the adhesion to the parent $\chi_t\in\mathcal{X}_t$, and a target for the extra constraints $d_t\in\mathcal{D}$, we define a rooted instance of Problem~\ref{prob_reduced}.
    We denote by $A_t$ the restriction of $A$ to the columns corresponding to vertices of $B(t)$ and to the rows corresponding to edges induced on $G[B(t)]$.
    Similarly, we denote by $W_t,c_t,\ell_t$, and $u_t$ the restriction of $W,c,\ell$ and $u$ to the vertices of $B(t)$, where $W_t(i,v)=0$ and $c_t(i,v)=0$ for $v\in\mathrm{adh}(t)$ and $i\in[k]$.
    Finally, we denote by $b_t$ the restriction of $b$ to the edges induced on $G[B(t)]$.

    The rooted problem is defined as 
    \[
        \min \, \left\{ c_t^\intercal x : A_tx \le b_t, \, W_tx=d_t, \, \ell_t\le x\le u_t,\, x(v)=\chi_t(v) \forall v\in\mathrm{adh}(t), \, x \in \Z^{B(t)} \right\}.
    \]
    We denote an instance of the rooted problem by $I:=I(t,\chi_t,d_t)$, and its optimal value by $\mathrm{OPT}(I)$.
    By restricting to the vertices and edges in a single bag $B_t$ instead of a rooted subtree $B(t)$, we define a \emph{local} instance $I^0(t,\chi_t,d_t)$ with optimal value $F^0_t(\chi_t,d_t)$.
    We denote any optimal solution of $I^0(t,\chi_t,d_t)$ by $x^0_t(\chi_t,d_t)$.
\end{definition}

Note that $I(t_0,\emptyset,d)$ is the original instance of Problem~\ref{prob_reduced}.
Thus, in order to solve Problem~\ref{prob_reduced} it is sufficient to give a recursive algorithm on how to solve rooted problems.
Further, for any $t\in V(T)$, there is $(2n\Delta f_{\ref{thm_cookproximity}}(n,\Delta)+1)^k\cdot(2f_{\ref{thm_cookproximity}}(n,\Delta)+1)^{O((k\Delta)^2)}$ different possible assignments for rooted instances, which is polynomial for fixed $\Delta,k\in\N$.

\begin{definition}\label{def_recursive}
    Let $I(t,\chi_t,d_t)$ be a rooted instance corresponding to vertex $t\in V(T)$.
    Let $t_1,\dots,t_q$ denote the children of $t$ in $T$ (where $q=0$  if $t$ is a leaf).
    We define
    \begin{align*}
        F_t(\chi_t,d_t):=\min\bigg\{F^0_t(\chi_t,&d^0_t)+\sum_{i\in[q]}(F_{t_i}(\chi_{t_i},d_{t_i}))\,:\,\chi_{t_i}\in\mathcal{X}_{t_i}\forall i\in[q],\,\chi_t(v)=\chi_{t_i}(v)\\
        &\forall i\in[q],\,v\in B_t\cap B_{t_i},\,d^0_t,d_{t_1},\dots,d_{t_q}\in\mathcal{D},\,d^0_t+\sum_{i\in[q]}d_{t_i}=d_t\bigg\}.
    \end{align*}

    We denote by $x_t(\chi_t,d_t)$ the composition of $x^0_t(\chi_t,d_t)$ with $x_{t_i}(\chi_{t_i},d_{t_i})$ for $i\in[q]$.
\end{definition}

\begin{proof}[Proof of \Cref{thm_main}]
    By \Cref{prop_reduction}, it suffices to give a polynomial-time algorithm for instances of Problem~\ref{prob_reduced}.
    By \Cref{thm_refinedstruct} we obtain a tree-decomposition $(T,\mathcal{B})$.
    We show that $F_t(\chi_t,d_t)=\mathrm{OPT}(I(t,\chi_t,d_t))$ for all $t\in T$, all $\chi_t\in\mathcal{X}_t$, and $d_t\in\mathcal{D}$.
    We proceed by induction on the distance of a vertex $t\in T$ from the root $t_0$, starting with the largest distance.
    
    The claimed algorithm corresponds to computing $F_{t_0}(\emptyset,d)$ as in \Cref{def_recursive}.

    By definition, the statement is true for any $t\in V(T)-t_0$, where $t$ is a leaf.
    Assume that $t$ is not a leaf.
    Then, $t$ has $q$ children $t_1,\dots,t_q$, such that for each $i\in[q]$, and each $\chi_{t_i}\in\mathcal{X}_{t_i}$ and $d_{t_i}\in\mathcal{D}$, it holds by induction that $F_{t_i}(\chi_{t_i},d_{t_i})=\mathrm{OPT}(I(t_i,\chi_{t_i},d_{t_i}))$.
    Recall from \Cref{sec:reduction}, that $\ell,u\in[-f_{\ref{thm_cookproximity}}(n,\Delta),f_{\ref{thm_cookproximity}}(n,\Delta)]^n\cap\Z^n$.
    Consider an optimal solution $x_t^*(\chi_t,d_t)\in[-f_{\ref{thm_cookproximity}}(n,\Delta),f_{\ref{thm_cookproximity}}(n,\Delta)]^n\cap\Z^n$ of $I(t,\chi_t,d_t)$.
    For $i\in[q]$, we can decompose $x_t^*(\chi_t,d_t)$ into solutions for $I_{t_i}(\chi_{t_i},d_{t_i})$, where $\chi_{t_i}$ and $d_{t_i}$ are induced by the corresponding assignments of $x_t^*(\chi_t,d_t)$.
    By definition of the spaces $\mathcal{X}_{t_i}$ for $i\in[q]$ and $\mathcal{D}$, we can see that the corresponding optimal values have been pre-computed.
    Further, the dynamic program explores all valid assignments for variables in $B_t$.
    Thus, $c_t^\intercal x_t^*(\chi_t,d_t)=c_t^\intercal x_t(\chi_t,d_t)$, certifying optimality.

    We conclude the proof with an analysis of the runtime.
    The total size of the tree-decomposition is linear in $n$, i.e., $|V(T)|\in O(n)$.
    For each vertex $t\in V(T)$, we obtain at most $(2n\Delta f_{\ref{thm_cookproximity}}(n,\Delta)+1)^k\cdot(2f_{\ref{thm_cookproximity}}(n,\Delta)+1)^{O((k\Delta)^2)}$ relevant rooted instances.
    Each local instance can be solved in polynomial time.
    For bags of type~\ref{type1}, this corresponds to enumerate all possible assignments within the proximity bound.
    For bags of type~\ref{type2}, this corresponds to solving an integer program with bounded subdeterminants and two non-zeros per row, see \Cref{thm_twononzeros}.
    Solving a rooted instance corresponds to solving a local instance and combining it with a polynomial number of rooted instances, which can again be done in polynomial time.
\end{proof}
\begin{proof}[Proof of \Cref{cor_tu}]
    The statement follows from \Cref{thm_main} and its proof with the following additional observations.

    Since $A$ is totally unimodular, we can solve the local problem for bags of type~\ref{type2} such that the exponent of the running time does not depend on $\Delta$, see e.g. Tardos~\cite{Tar86}.
    Further, here the proximity function $f_{\ref{thm_cookproximity}}(n,\Delta)$ does not depend on the dimension of the problem $n$, but just on $\Delta$ and $k$, see Aprile et al.~\cite[Theorem~4]{AFJ24}.
\end{proof}

\section{Conclusion}\label{sec:conclusion}
We study IPs with bounded subdeterminants and give a strongly polynomial-time algorithm to solve a new structured class of IPs.
We achieve our results by combining structural results with a dynamic programming approach.
Within the dynamic program, we use the algorithm to solve integer programs with bounded subdeterminants and two non-zeros per row~\cite{FJWY25} in a blackbox fashion.
In particular, for our main result, the proximity result by Cook et al.~\cite{CGST86} can be used directly.
This is in contrast to other recent papers, that make use of auxiliary proximity or solution decomposition results, see e.g.~\cite{AWZ17,EW19,ERW24,AFJ24,FJWY25}.
Still, this leaves open, whether one can prove a stronger proximity result for Problem~\ref{prob_reduced} or its generalizations.
Such a proximity result could open the door to more general results on the solvability of totally $\Delta$-modular IPs.

In the work by Fiorini et al.~\cite{FJWY25}, the majority of the technical difficulty lies in constraint matrices with bounded subdeterminants and two non-zeros per \emph{row}. 
They also solve the case with two non-zeros per \emph{column}.
We do not extend this part of their result to additional rows and columns.
Note that such integer programs can model a version of the \emph{exact matching} problem~\cite{PY82} with a structural restriction.
In the exact matching problem, we are given a graph with red and blue edges, and $k\in\N$.
The goal is to decide whether there exists a perfect matching with exactly $k$ red edges.
This problem is not known to be $\mathrm{NP}$-hard, but no efficient deterministic algorithm is known.
Subproblems involving the exact matching problem have proven to be complex to solve deterministically in the literature on subdeterminants, see~\cite{NSZ24,NNSZ24,AFJ24}.

Finally, we remark on a structural question for matrices with bounded subdeterminants.
In many recent papers, it has been helpful to exploit a relation between such matrices and certain graph classes, see~\cite{AWZ17,FJWY25,NSZ24,NNSZ24,AFJ24}.
We ask whether this connection can be made more precise, and if the question of solving integer programs with bounded subdeterminants can be solved by analyzing the right graph classes.
One indication for such a relation can be found in recent work on the number of disjoint columns in matrices with bounded subdeterminants, see~\cite{LPSX22,AS23}.

\section*{Acknowledgements}
The author thanks Manuel Aprile, Samuel Fiorini, Caleb McFarland, Stefan Weltge and Yelena Yuditsky for helpful discussions.
The author acknowledges funding by the ERC Consolidator Grant 615640-ForEFront and the \emph{Fonds de la Recherche Scientifique} - FNRS through research project BD-DELTA (PDR 20222190, 2021–24).

In addition, we would like to thank the anonymous reviewers for their careful reading and their helpful comments which greatly helped to improve the paper.

\bibliographystyle{splncs04}
\bibliography{references}

\end{document}